\lstdefinestyle{mystyle}{
  backgroundcolor=\color{backcolour},   commentstyle=\color{codegreen},
  keywordstyle=\color{magenta},
  numberstyle=\tiny\color{codegray},
  stringstyle=\color{codepurple},
  basicstyle=\ttfamily\footnotesize,
  commentstyle=\color{red!10!green!70}\textit,
  breakatwhitespace=false,         
  breaklines=true,                 
  captionpos=b,                    
  keepspaces=true,                 
  numbers=left,                    
  numbersep=5pt,                  
  showspaces=false,                
  showstringspaces=false,
  showtabs=false,                  
  tabsize=2
}
\DeclareRobustCommand\onedot{\futurelet\@let@token\@onedot}
\def\@onedot{\ifx\@let@token.\else.\null\fi\xspace}
\def\eg{\emph{e.g}\onedot} 
\def\ie{\emph{i.e}\onedot}
\DeclareMathOperator*{\argmax}{arg\,max}
\DeclareMathOperator*{\argmin}{arg\,min}
\newtheorem{theorem}{Theorem}
\newtheorem{lemma}{Lemma}
\newtheorem{remark}{Remark}
\theoremstyle{definition}
\renewcommand{\paragraph}{%
  \@startsection{paragraph}{4}%
  {\z@}{0ex \@plus 0ex \@minus 0ex}{-1em}%
  {\hskip\parindent\normalfont\normalsize\bfseries}%
}
\crefname{algocf}{alg.}{algs.}
\Crefname{algocf}{Algrithm}{Algrithm}
\definecolor{gblue}{HTML}{4285F4}
\definecolor{gred}{HTML}{DB4437}
\definecolor{gray}{gray}{0.9}
\acrodef{drc}[DRC]{Deep Region Competition}
\acrodef{moe}[MoE]{Mixture of Experts}
\acrodef{mlem}[ML-EM]{Maximum-Likelihood Expectation-Maximization}
\acrodef{lebm}[LEBM]{Latent-space Energy-Based Model}
\acrodef{mle}[MLE]{Maximum Likelihood Estimation}
\acrodef{em}[EM]{Expectation-Maximization}
\acrodef{birds}[Birds]{Caltech-UCSD Birds-200-2011}
\acrodef{dogs}[Dogs]{Stanford Dogs}
\acrodef{cars}[Cars]{Stanford Cars}
\acrodef{tmds}[TM-dSprites]{Textured Multi-dSprites}
\acrodef{tvn}[TV-norm]{Total Variation norm}
\newtheorem{corollary}{Corollary}[theorem] 
\title{Watermarking Generative Tabular Data}
\author{%
  Hengzhi He\thanks{Equal Contribution.} \\
  \texttt{hengzhihe@g.ucla.edu} 
  \And
  Peiyu Yu* \\
  \texttt{yupeiyu98@g.ucla.edu}
  \And
  Junpeng Ren \\
  \texttt{junren18@g.ucla.edu}
  \And
  Ying Nian Wu \\
  \texttt{ywu@stat.ucla.edu}
  \And
  Guang Cheng \\
  \texttt{guangcheng@ucla.edu}\\
}
\begin{document}

\maketitle

\begin{abstract}
In this paper, we introduce a simple yet effective tabular data watermarking mechanism with statistical guarantees. We show theoretically that the proposed watermark can be effectively detected, while faithfully preserving the data fidelity, and also demonstrates appealing robustness against additive noise attack. The general idea is to achieve the watermarking through a strategic embedding based on simple data binning. Specifically, it divides the feature's value range into finely segmented intervals and embeds watermarks into selected ``green list" intervals. To detect the watermarks, we develop a principled statistical hypothesis-testing framework with minimal assumptions: it remains valid as long as the underlying data distribution has a continuous density function. The watermarking efficacy is demonstrated through rigorous theoretical analysis and empirical validation, highlighting its utility in enhancing the security of synthetic and real-world datasets. 
\end{abstract}

\section{Introduction}
\label{sec:intro}
The recent surge of powerful generative models has lead to increasingly adept generative data synthesizers \cite{vaswani2017attention,ho2020denoising,song2020score,yu2021unsupervised,rombach2022high,yu2022latent,qian2023learning,yu2024learning,zhang2024object} that closely mimic real datasets. 
However, the surge in AI-driven data synthesis also raises significant concerns. Distinguishing AI-generated content from human-generated content poses challenges that impact copyright infringement, privacy breaches, and the spread of misinformation. These concerns have prompted regulatory responses at both national and international levels. For example, the White House’s Executive Order \footnote{
\href{https://www.whitehouse.gov/briefing-room/presidential-actions/2023/10/30/executive-order-on-the-safe-secure-and-trustworthy-development-and-use-of-artificial-intelligence/}{https://www.whitehouse.gov/briefing-room/presidential-actions/2023/10/30/executive-order-on-the-safe-secure-and-trustworthy-development-and-use-of-artificial-intelligence/}} and the EU’s Artificial Intelligence Act \footnote{
\href{https://artificialintelligenceact.eu/wp-content/uploads/2024/02/AIA-Trilogue-Committee.pdf}{https://artificialintelligenceact.eu/wp-content/uploads/2024/02/AIA-Trilogue-Committee.pdf}} both emphasize the importance of secure, responsible AI practices and making AI-generated content detectable and traceable to uphold transparency and protect users' rights.

In the context of ensuring the integrity and authenticity of generative products, watermarking techniques emerge as a common solution. Significant advancements have been achieved in the watermarking of unstructured generative data, such as texts \cite{kirchenbauer2023watermark}, \cite{zhao2023provable} and images \cite{wen2023treerings} (please see \cref{appx:related_works} for an extended discussion of related works). However, the structured domain of tabular data remains less explored. Effective watermarking in this area must address the specific challenges of maintaining data fidelity and usability in structured datasets, which are critical in applications like healthcare and finance where data integrity is paramount. 

To fill in this important missing part on the landscape of watermarking generative data, in this work we propose, to the best of our knowledge, the first tabular data watermarking framework with solid theoretical foundation. We focus extensively on watermarking continuous variables in the tabular data.
Our proposed mechanism is achieved through a strategic embedding of watermarks using data binning. Specifically, it divides the feature's value range into finely segmented intervals and embeds watermarks into selected ``green list'' intervals. This specification of ``green list'' intervals shares the same spirit as ``green list'' techniques as in text data watermarks \citep{kirchenbauer2023watermark}, while the methodology and underlying theoretical framework are completely new. To detect the watermarks, we develop a statistical hypothesis-testing framework with minimal assumptions, requiring that the underlying data distribution has a continuous density function. Finally, we provide empirical evidence that demonstrates the effectiveness of our proposed framework on both synthetic and real-world tabular datasets. We summarize and highlight our major contributions as follows:

\begin{figure}[!tbp]
    \centering    
    \includegraphics[width=0.75\textwidth]{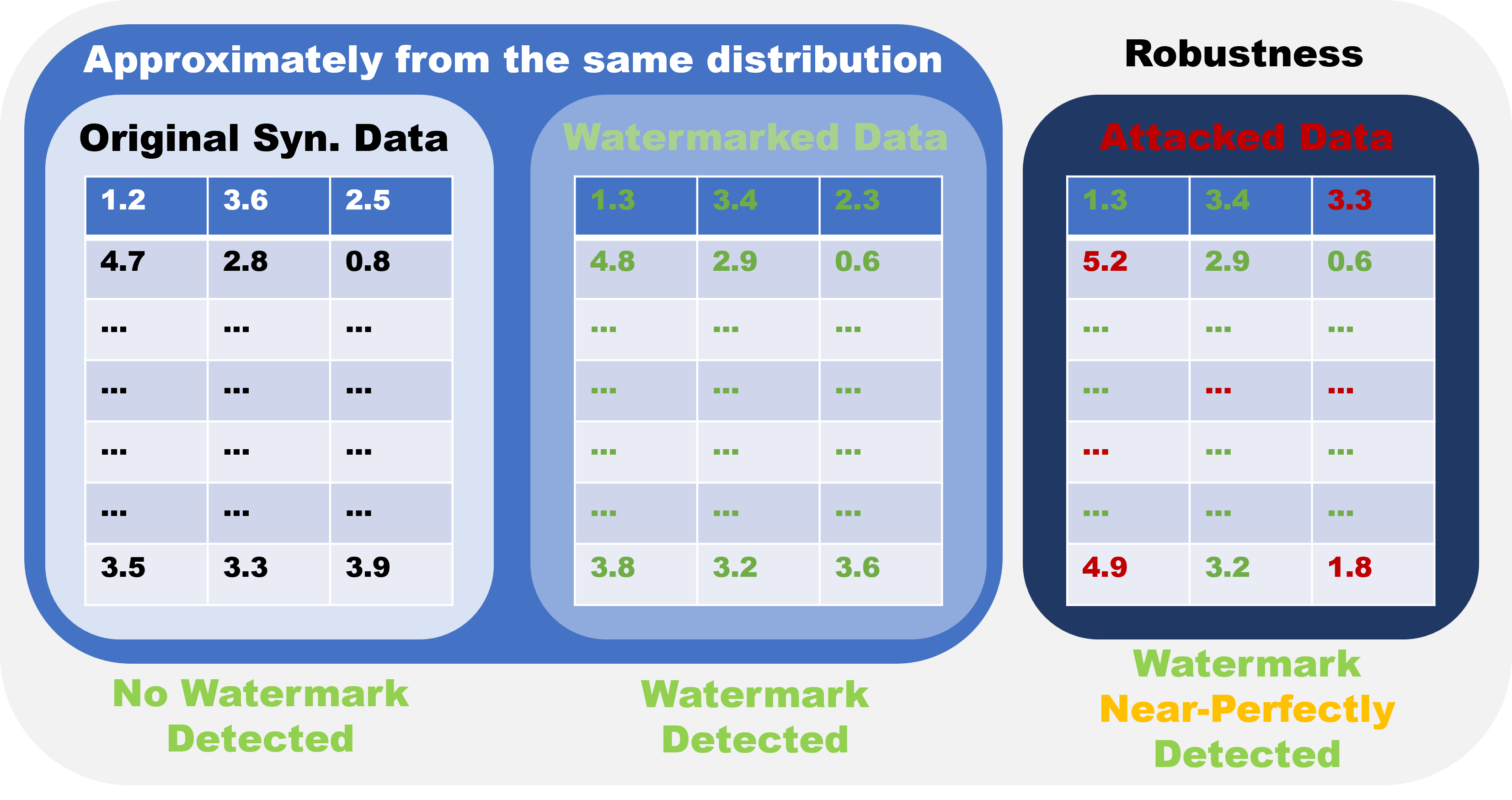}
    \caption{\textbf{Overview of the tabular data watermarking scheme.}}
    \label{fig:wm_overview}
\end{figure}

\begin{itemize}
     \item \textbf{Theoretical guarantee of data fidelity:} We show theoretically that finer or smaller intervals result in watermarked data closer to the original data, specifically with an error rate of $O(\frac{1}{m})$, where $m$ is the number of ``green list'' intervals. Empirically, we observe minimal fidelity and utility loss on both synthetic and real datasets when applying our proposed watermark to the generative tabular data.
     \item \textbf{Principled detection framework:} We propose a principled hypothesis-testing framework for tabular data watermark detection. Our testing process is backed by an interesting theoretical result that as the number of intervals $m \to \infty$, the probability of a data point falls within the ``green list'' intervals converges to $\frac{1}{2}$.
     \item \textbf{Robustness against noise masking attack:} We demonstrate appealing robustness of our proposed tabular data watermark against attacks with additive noise, where attackers use continuous noise to perturb the watermarked tabular data. Our theoretical result indicates that if the success probability of attacking an individual element is capped at \( \frac{1}{2} \),  
     then even attacking almost all elements is insufficient to significantly increase the likelihood of overcoming the hypothesis test. 
     We show that our watermark remains valid even when $\sim 95\%$ of the elements are attacked with large noise. We validate our result on both synthetic and real datasets, and observe that our watermark can be effectively detected.
   \end{itemize}

\section{Watermarking Tabular Data}
\subsection{Problem Statement} 
\label{sec:problem_state}
We consider a dataset $\mathbf{X}$, structured as an $n \times p$ table where each of the $p$ columns consists of $n$ i.i.d. data points 
from a distribution $F_i$, each with a continuous probability density function $f_i,i=1,...,p$. Typically, $\mathbf{X}$ represents synthetic data generated from some generative model, which we refer to as generative tabular data throughout the paper. Our objective is to construct a watermarked version of this dataset, denoted as $\mathbf{X}_w$. This watermarked dataset aims to achieve three primary goals (\cref{fig:wm_overview}): i) maintaining a minimal discrepancy $| \mathbf{X} - \mathbf{X}_w |$ under standard assumptions; ii) ensuring that $\mathbf{X}_w$ can be reliably identified as the outcome of our specific detection process; and iii) achieving desirable robustness against potential attacks. We next show how this watermark can be achieved with a surprisingly simple yet effective procedure. 

\subsection{Watermarking Tabular Data with Data Binning}

The proposed watermark is applied element-wise, with the detailed procedure consisting of the following steps (illustrated in \cref{fig:wm_workflow}):

\begin{algorithm}[!htbp]
\caption{\textbf{Tabular data watermarking algorithm.}}
\label{alg:wm_algo}
\KwIn{number of ``green list'' intervals $m$; 
original tabular dataset $\mathbf{X}$.}
\KwOut{
watermarked dataset $\mathbf{X}_w$.
}
\BlankLine	
\textbf{Interval Division:} The continuous interval from 0 to 1 is divided into \(2m\) equal parts, forming intervals such as \([0, \frac{1}{2m}]\), \([\frac{1}{2m}, \frac{2}{2m}]\), \ldots, \([\frac{2m-1}{2m}, 1]\), and form $m$ pairs of consecutive intervals \(\{[0,\frac{1}{2m}],[\frac{1}{2m},\frac{2}{2m}] \}, \{[\frac{2}{2m},\frac{3}{2m}],[\frac{3}{2m},\frac{4}{2m}]\},\cdots, \{[\frac{2m-2}{2m},\frac{2m-1}{2m}],[\frac{2m-1}{2m},1] \}\). 
    
\textbf{Green List Selection:} From each pair of intervals, one interval is selected as a ``green list'' interval. The resulting set of $m$ intervals is denoted as $G$, the set of ``green list'' intervals. 
    
\textbf{Watermark Embedding:}
\For{each element $x$ in $\mathbf{X}$}{

    \textbf{Finding the nearest green list interval $g$:} For the fractional part of each data point $x$, we identify the closest interval on the green list as 
    $g = 
    \argmax\limits_{{g \in G}}
    \|\left(x - i(x) \right)  - {\rm center}(g) \|$, where $i(x)$ is the integer part of $x$.
    
    \textbf{Imposing green list constraints:}
    \textbf{If} $x - i( x ) \in g$, then $x$ is left as is. \textbf{Else}, we replace $x$ as $i( x ) + r$, where $r$ is uniformly sampled from $g$.
}
\end{algorithm}

\begin{figure}[!tbp]
    \centering    
    \includegraphics[width=0.85\textwidth]{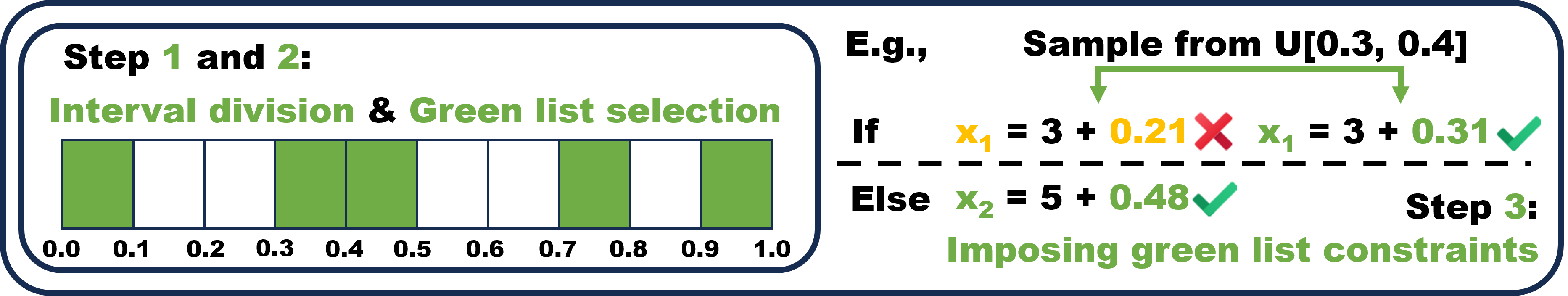}
    \caption{\textbf{Illustration of our proposed watermarking scheme for tabular data.} Specifically, our scheme consists of three major steps: i) dividing the continuous interval $[0, 1]$ into $2m$ equal parts, forming $m$ pairs of consecutive intervals; ii) randomly selecting one interval from each pair, resulting in the set of $m$ ``green list'' intervals; and iii) sampling new fractional part for the input element from the nearest ``green list'' interval if the original fractional part falls outside of this interval.}
    \label{fig:wm_workflow}
\end{figure}

\paragraph{An illustrative example}
We can consider an artificial $10 \times 1$ dataset as an illustrative example for the implementation of this watermarking process. Without loss of generality, we can assume that the value range of this dataset is 
$[0,1]$. Otherwise, we can subtract the integer part of each element in the table to obtain the fractional part that falls with in $[0, 1]$. In the context of this manuscript, we therefore consider an element $x$ as falling into the ``green list'' if and only if its fractional part $x-i(x)$ falls into one of the intervals inside the ``green list'' intervals in $[0,1]$; $i(x)$ is the integer part of $x$. For simplicity, we set the watermark with $m=5$. The value range is then divided into 10 smaller intervals $\{[0, 0.1], ..., [0.9, 1]\}$, forming 5 pairs of consecutive intervals $\{P_i \}_{i=1}^{5}$. From each pair, one interval is randomly selected for the green list. As an example, the green list intervals (highlighted in {\color{teal}green}) selected might be:
\begin{equation*}
\begin{aligned}
& P_1 = \{{\color{teal}[0.0,0.1]},[0.1,0.2]\}, 
  P_2 = \{[0.2,0.3],{\color{teal}[0.3,0.4]} \}, 
  P_3 = \{{\color{teal}[0.4,0.5]},[0.5,0.6] \}, \\ 
& P_4 = \{[0.6,0.7],{\color{teal}[0.7,0.8]} \}, 
  P_5 = \{[0.8,0.9],{\color{teal}[0.9,1.0]} \}.
\end{aligned}
\end{equation*}
Given this setup, a data point $x$ with a fractional value of 0.21 would be identified as falling outside the green list interval in $P_2$. Consequently, a new value would be randomly chosen from the nearest green list interval, 
$\color{teal}[0.3,0.4]$. For instance, 
$0.31$ could be selected as the watermarked value for 
$0.21$. This procedure is repeated for each subsequent data point to generate a fully watermarked tabular dataset. Given $m$, we can use $(x-i(x)) / (1 / 2m)$ to find the nearest ``green list'' intervals pair and consequently the closest interval with $O(1)$ time complexity. Therefore, the overal time complexity for watermarking an $n \times p$ tabular dataset is $O(np)$. We provide python-style pseudocode in \cref{appx:pseudo_code} to facilitate understanding of the watermarking scheme.

\paragraph{Tabular watermark with marginal data distortion}
We establish the following theorem concerning the impact on data fidelity of our watermarking approach.

\begin{theorem}[Fidelity]
\label{thm:fidelity}
Let $\mathbf{X}$ be a $n \times p$ dataframe, and let $\mathbf{X}_w$ denote its watermarked version. Conditioned on $\mathbf{X}$, it holds with probability one that
$$\|\mathbf{X}_w -\mathbf{X}\|_{\infty} \leq \frac{1}{m},$$
where $m$ is the number of ``green list'' intervals, a parameter controlling the granularity of the watermarking process. 
\end{theorem}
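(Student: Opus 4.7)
The plan is to reduce the matrix-level bound to a per-entry bound and then verify the per-entry bound with a short triangle-inequality argument that uses only the geometry of the green intervals. Since $\|\mathbf{X}_w-\mathbf{X}\|_{\infty}=\max_{i,j}|X_{w,ij}-X_{ij}|$ and the watermarking procedure acts independently on each entry, it suffices to show that for any single entry $x$ with watermarked counterpart $x_w$ one has $|x_w-x|\le \frac{1}{m}$. Taking the maximum over the $np$ entries then gives the claim, and the conclusion holds with probability one (conditional on $\mathbf{X}$) because the only randomness in $x_w$, namely the uniform draw of $r$ from the chosen green interval, cannot move $r$ outside that interval.

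Fix an entry $x$ and write $y=x-i(x)\in[0,1)$ for its fractional part. If $y$ already lies in the nearest green interval $g^\ast$, then $x_w=x$ and there is nothing to prove. Otherwise $x_w=i(x)+r$ with $r$ uniform on $g^\ast$, so $|x_w-x|=|r-y|$. Let $c^\ast$ denote the center of $g^\ast$. Because $g^\ast$ has length $\frac{1}{2m}$, the triangle inequality gives
$$|r-y|\;\le\;|r-c^\ast|+|c^\ast-y|\;\le\;\frac{1}{4m}+|c^\ast-y|,$$
so the whole argument reduces to proving $|c^\ast-y|\le \frac{3}{4m}$.

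For this geometric step, let $P$ be the pair of consecutive length-$\frac{1}{2m}$ subintervals whose union contains $y$; the pair $P$ is itself an interval of length $\frac{1}{m}$. The green interval of $P$ is one of its two halves, so its center lies inside $P$, and hence is within distance $\frac{3}{4m}$ of every point of $P$ (the extremal case occurs when $y$ sits at the outer endpoint of the red half while the green half is on the opposite side, placing $y$ a distance $\frac{1}{2m}+\frac{1}{4m}=\frac{3}{4m}$ from that green center). Since $c^\ast$ is, by definition, the green center nearest to $y$ among all $m$ choices, it is at least as close, yielding $|c^\ast-y|\le \frac{3}{4m}$. Substituting back gives $|x_w-x|\le\frac{1}{m}$, and taking a maximum over entries closes the proof.

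I do not expect any substantive obstacle: the reduction is immediate and the geometric bound is elementary. The only mild subtlety worth flagging is that the chosen $g^\ast$ need not lie in the same pair $P$ as $y$; the argument therefore uses the same-pair green only as an \emph{upper bound witness} on the distance from $y$ to $c^\ast$, rather than as a characterization of $c^\ast$.
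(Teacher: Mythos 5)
Your proof is correct and follows essentially the same route as the paper's: both arguments compare the selected nearest green interval to the green interval belonging to the same pair as the fractional part of $x$, and bound the resulting displacement by the diameter $\frac{1}{m}$ of that pair. The only cosmetic difference is that you split the bound through the interval center as $\frac{1}{4m}+\frac{3}{4m}$, whereas the paper bounds $\max_{y\in Y^*}d(x,y)$ directly via the monotonicity of the max-distance over candidate intervals.
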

We refer to \cref{appx:proof_thm_fidelity} for the full proof of \cref{thm:fidelity}. A corollary naturally emerges that establishes an upper bound on the Wasserstein distance between $\mathbf{X}_w$ and $\mathbf{X}$ based on \cref{thm:fidelity}, providing a quantifiable measure of the distance between the two distributions.
\begin{corollary}
\label{corollary:wasserstein}
Let $F_{\mathbf{X}}=\sum_{j=1}^{n}\frac{1}{n}\delta_{\mathbf{X}[j,:]}$ be the empirical distribution built on $\mathbf{X}$, let $F_{\mathbf{X}_w}=\sum_{j=1}^{n}\frac{1}{n}\delta_{\mathbf{X}_w[j,:]}$ built on $\mathbf{X}_w$, then it holds with probability one that
\begin{align}
\mathcal{W}_k\left(F_{\mathbf{X}}, F_{\mathbf{X}_w}\right)\leq \frac{p^{\frac{1}{2}}}{m}, 
\end{align}
where $\mathcal{W}_k$ is the $k$-Wasserstein distance. 
\end{corollary}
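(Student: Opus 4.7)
The plan is to exploit the definition of the $k$-Wasserstein distance as an infimum over couplings, and then to exhibit the \emph{natural diagonal coupling} that pairs the $j$-th row of $\mathbf{X}$ with the $j$-th row of $\mathbf{X}_w$. Because the watermarking procedure is applied element-wise and each original row is transformed into a specific watermarked row, this diagonal coupling is immediately available and gives an upper bound on $\mathcal{W}_k$ that will turn out to be tight enough.

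Concretely, first I would write
\begin{equation*}
\gamma = \frac{1}{n}\sum_{j=1}^{n}\delta_{\bigl(\mathbf{X}[j,:],\,\mathbf{X}_w[j,:]\bigr)},
\end{equation*}
and verify that $\gamma \in \Pi(F_{\mathbf{X}}, F_{\mathbf{X}_w})$, i.e.\ that its marginals are $F_{\mathbf{X}}$ and $F_{\mathbf{X}_w}$ respectively. This is direct from the definition of the two empirical measures. Consequently, by the variational formulation of $\mathcal{W}_k$,
\begin{equation*}
\mathcal{W}_k\!\left(F_{\mathbf{X}}, F_{\mathbf{X}_w}\right)^{k}
\;\leq\; \int \|\x - \y\|^{k}\, d\gamma(\x,\y)
\;=\; \frac{1}{n}\sum_{j=1}^{n} \bigl\|\mathbf{X}[j,:] - \mathbf{X}_w[j,:]\bigr\|_{2}^{k}.
\end{equation*}

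Next I would control each summand using \cref{thm:fidelity}. Since every coordinate of $\mathbf{X}_w - \mathbf{X}$ is bounded by $1/m$ almost surely, the Euclidean norm of the $j$-th row difference is bounded by $\sqrt{p}/m$; raising this to the $k$-th power and averaging over $j$ yields $\mathcal{W}_k^{k} \leq (p^{1/2}/m)^{k}$, and taking the $k$-th root produces the stated bound $\mathcal{W}_k \leq p^{1/2}/m$.

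I do not expect a genuine obstacle here: the whole argument is essentially a one-line consequence of the fidelity theorem once the diagonal coupling is recognized. The only small subtlety is a conventional one, namely fixing the norm used to define $\mathcal{W}_k$ (Euclidean on $\R^{p}$) and relating the $\ell_\infty$ bound from \cref{thm:fidelity} to the $\ell_2$ norm via the factor $p^{1/2}$; this is what produces the $p^{1/2}$ in the numerator.
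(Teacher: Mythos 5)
Your proposal is correct and matches the paper's argument: the paper also uses the diagonal coupling (taking $T=\operatorname{diag}\{\tfrac{1}{n},\dots,\tfrac{1}{n}\}$ in the discrete optimal-transport formulation) and then bounds each row difference by $p^{1/2}/m$ via the $\ell_\infty$-to-$\ell_2$ conversion from \cref{thm:fidelity}. No gaps.
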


\begin{remark}
\cref{thm:fidelity} assures us that the proposed watermark has marginal impact on the data fidelity. Specifically, it indicates that by increasing $m$ to sufficiently refine the granularity of the intervals (i.e., the length of each interval is $1/(2m)$), the watermarked data $\mathbf{X}_w$ will closely approximate the original $\mathbf{X}$, with an error rate of $1/m$ (the bounds in \cref{thm:fidelity,corollary:wasserstein} are tight). This property is crucial for ensuring that the fidelity of the data is maintained, while still embedding a robust watermark, as we will see later.
\end{remark}

\section{Detection of the Tabular Data Watermark}
\label{sec:detection}
Similar to \cite{kirchenbauer2023watermark}, the detection of watermarks in tabular data is conceptualized within a theoretical framework that transforms the process into a hypothesis-testing problem. In this context, we first introduce a theorem that solidifies the theoretical underpinnings of watermark detection:
\begin{lemma}[Prelim. for detection]
\label{thm:preliminary}
Consider a probability distribution \(F\) with a continuous probability density function \(f\). 
As \(m \rightarrow \infty\), 
$$P_{x\sim F}(x-i(x) \in G) \to \frac{1}{2},$$ 
where $i(x)$ is the integer part of $x$, such that $x \in [i(x),i(x)+1)$; $G$ represents the set of green list intervals. $x-i(x)$ therefore specifies the fractional part of the data point $x$. 
\end{lemma}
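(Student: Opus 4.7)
The plan is to translate the question into an integral against the density of the fractional part $\{x\}:=x-i(x)$ on $[0,1]$, and then exploit that $G$ is built by selecting one of two equal halves from each of $m$ pairs of length $1/m$: on such short pairs a regular density is almost constant, so each half carries mass close to half that of its pair, and summing over the $m$ pairs yields total mass on $G$ close to $1/2$.

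First I would introduce the fractional-part density $\tilde f(t) := \sum_{k\in\mathbb Z} f(k+t)$, which by Tonelli is well-defined a.e.\ on $[0,1]$ with $\tilde f\in L^1([0,1])$ and $\int_0^1 \tilde f\,dt = 1$, yielding $P_{x\sim F}(x-i(x)\in G) = \int_G \tilde f(t)\,dt$. Indexing the $m$ pairs as $P_j=[(j-1)/m,\,j/m]$ with left/right halves $L_j,R_j$ of length $1/(2m)$, and writing $g_j\in\{L_j,R_j\}$ for the green half with sign $\sigma_j\in\{\pm 1\}$ encoding the choice, a direct expansion together with $\sum_j \int_{P_j}\tilde f = 1$ gives
\[
\int_G \tilde f\,dt - \tfrac{1}{2} \;=\; \sum_{j=1}^m \sigma_j\cdot\tfrac{1}{2}\!\left(\int_{L_j}\tilde f - \int_{R_j}\tilde f\right),
\]
and hence the bound $\bigl| P_{x\sim F}(x-i(x)\in G) - \tfrac{1}{2} \bigr| \le \tfrac{1}{2}\sum_{j=1}^m \bigl| \int_{L_j}\tilde f - \int_{R_j}\tilde f \bigr|$.

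The main step is to show this right-hand side vanishes as $m\to\infty$. For any $h\in C([0,1])$, uniform continuity and the integral mean value theorem give $\bigl|\int_{L_j}h - \int_{R_j}h\bigr| \le \omega_h(1/m)/(2m)$, so summing over $j$ yields a bound $\omega_h(1/m)/2\to 0$. For the actual density $\tilde f$, I would fix $\varepsilon>0$ and approximate in $L^1([0,1])$ by a continuous $h$ with $\|\tilde f - h\|_1<\varepsilon$; by the reverse triangle inequality the $h\mapsto\tilde f$ substitution changes the sum $\sum_j|\cdot|$ by at most $\sum_j \int_{P_j}|\tilde f - h| = \|\tilde f - h\|_1 < \varepsilon$, giving $\limsup_{m\to\infty}\sum_{j=1}^m|\cdot|\le \varepsilon$ and then $\to 0$ upon $\varepsilon\downarrow 0$.

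The main obstacle I anticipate is that continuity of $f$ on $\mathbb R$ does not automatically pass to continuity (or even boundedness) of the aliased density $\tilde f$ on $[0,1]$ when $f$ has slowly-decaying tails, so any argument relying on pointwise regularity of $\tilde f$ is fragile. Working only with $\tilde f\in L^1([0,1])$ together with density of $C([0,1])\subset L^1([0,1])$ sidesteps this cleanly.
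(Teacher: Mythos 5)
Your proof is correct, and it takes a genuinely different route from the paper's. The paper truncates to a compact interval $[-n,n]$ capturing all but $\epsilon$ of the mass, then pairs each green interval $g_k(j)$ with its red partner $h_k(j)$ and uses uniform continuity of $f$ on $[-n,n]$ to show the green and red probabilities inside $[-n,n]$ differ by less than $\epsilon$; combined with their sum being close to $1$, this forces each to be close to $\tfrac12$. You instead fold everything onto $[0,1]$ via the aliased density $\tilde f(t)=\sum_k f(k+t)$, obtain the exact identity $\int_G\tilde f-\tfrac12=\tfrac12\sum_j\sigma_j\bigl(\int_{L_j}\tilde f-\int_{R_j}\tilde f\bigr)$, and control the right-hand side by $L^1$-approximation of $\tilde f$ with a continuous function. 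Your version buys two things: it is strictly more general, since it never uses continuity of $f$ (only $f\in L^1$, so the lemma actually holds for any absolutely continuous $F$), and the bound $\tfrac12\sum_j\lvert\int_{L_j}\tilde f-\int_{R_j}\tilde f\rvert$ is manifestly uniform over all $2^m$ green-list choices, which is the uniformity the paper only notes in a remark. Your anticipated obstacle is real --- $\tilde f$ need not inherit continuity or even boundedness from $f$ --- and your $L^1$ workaround handles it correctly. The paper's approach is more elementary in that it avoids invoking density of $C([0,1])$ in $L^1([0,1])$, at the cost of the extra tail-truncation bookkeeping and the (here unnecessary) continuity hypothesis.
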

\begin{remark}
From the proof of \cref{thm:preliminary} (see \cref{appx:proof_thm_prelim}), we can see that this convergence is in fact consistent on all the possible choices of the green list.
\end{remark}

We formulate the task of detecting watermarks as a hypothesis-testing problem:

\begin{center}
\(H_0\): The table is not watermarked. vs. \(H_1\): The table is watermarked.
\end{center}

Based on the theoretical result in \cref{thm:preliminary}, we can claim that for any column of continuous variables in the given tabular data, the probability of an element falling into the ``green list'' intervals approximates $\frac{1}{2}$, when $m$ is large enough. This result is tangential to how the ``green list'' intervals are exactly chosen, \ie, any possible choice of these intervals following the procedure in \cref{alg:wm_algo} would suffice. Let $\mathbf{T}_i$ denotes the number of elements in the $i$-th column that fall into the ``green list'' intervals. We can see that $\mathbf{T}_i$ approximately follows a binomial distribution $B(n, \frac{1}{2})$ under $H_0$ when $m$ is large. For a particular value $t_i$ of $\mathbf{T}_i$, the p-value can be calculated using $\mathbf{P}(B(n, \frac{1}{2}) \geq t_i)$ to determine how statistically significant $t_i$ is.

When $n$ is large, by the central limit theorem, we can further model $T_i$ by 
\begin{align}
2\sqrt{n}(\frac{T_i}{n}-\frac{1}{2}) \to N(0,1).
\end{align}

To extend the analysis to tabular data with multiple columns, we need to consider the joint distribution across all all \(p\) columns.
We present the following theorem that indicates a quite surprising result which we term as the ``asymptotic independence'' of the watermarked column distributions. Specifically, for a random sample (one random row) of the $n\times p$ table, \ie, $\mathbf{x}=(x_1,x_2,\cdots,x_p)$ generated from a distribution $F$ with continuous probability density function, the events $\{x_i-i(x_i)\in G\},i=1,2,\cdots,p$ are independent when $m\to \infty$:
\begin{theorem}
[Asymptotic independence]
\label{thm:preliminary_p_dimensional}
Consider a $p$-dimensional probability distribution $F$ with continuous probability density function $p(x_1,x_2,x_3,\cdots,x_p)$, then as $m \to \infty$,
\begin{align}
\begin{split}
\mathbf{P}_{x \sim F}(\bigcap_{i=1}^{p} A_i )\to (\frac{1}{2})^{p},
\end{split}
\end{align}
where $A_i \in \{ \{x-i(x)\in G \}, \{ x-i(x)\notin G \}   \}$
\end{theorem}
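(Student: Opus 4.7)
The plan is to extend the one-dimensional \cref{thm:preliminary} to $p$ dimensions by partitioning $\mathbb{R}^p$ into product ``pair-boxes'' of side length $1/m$ and, within each pair-box, isolating the unique sub-box that realizes the joint event $\bigcap_{i=1}^p A_i$. The driving intuition is that at the scale $1/m$ a continuous joint density is essentially constant, so the mass of this distinguished sub-box is approximately $2^{-p}$ times the mass of its enclosing pair-box; summing over all pair-boxes then delivers the claimed limit $(1/2)^p$.

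Concretely, for each $\mathbf{k} \in \mathbb{Z}^p$ and $\mathbf{j} \in \{0,1,\ldots,m-1\}^p$ I would define the pair-box
$$B_{\mathbf{k},\mathbf{j}} = \prod_{i=1}^{p} \Bigl[\, k_i + \tfrac{j_i}{m},\; k_i + \tfrac{j_i+1}{m} \,\Bigr],$$
so that these tile $\mathbb{R}^p$ up to a null set. Within $B_{\mathbf{k},\mathbf{j}}$ the pair $(k_i,j_i)$ pins down which half of the $i$-th factor interval the event $A_i$ prescribes, so $B_{\mathbf{k},\mathbf{j}} \cap \bigcap_i A_i$ is a specific sub-box $B^*_{\mathbf{k},\mathbf{j}}$ of volume $(1/2m)^p = 2^{-p} |B_{\mathbf{k},\mathbf{j}}|$. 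Letting $\omega_{\mathbf{k},\mathbf{j}}$ denote the oscillation of the joint density on $B_{\mathbf{k},\mathbf{j}}$, the elementary sandwich $\inf_B p \cdot |S| \le \int_S p \le \sup_B p \cdot |S|$ applied to each sub-box $S$ of $B_{\mathbf{k},\mathbf{j}}$ yields
$$\bigl| \mathbf{P}(\mathbf{x} \in B^*_{\mathbf{k},\mathbf{j}}) - 2^{-p}\,\mathbf{P}(\mathbf{x} \in B_{\mathbf{k},\mathbf{j}}) \bigr| \;\le\; \omega_{\mathbf{k},\mathbf{j}} \cdot \Bigl(\tfrac{1}{2m}\Bigr)^p.$$
Summing over $(\mathbf{k},\mathbf{j})$ and using $\sum_{\mathbf{k},\mathbf{j}} \mathbf{P}(\mathbf{x} \in B_{\mathbf{k},\mathbf{j}}) = 1$, the triangle inequality reduces the problem to showing that $\sum_{\mathbf{k},\mathbf{j}} \bigl| \mathbf{P}(\mathbf{x} \in B^*_{\mathbf{k},\mathbf{j}}) - 2^{-p}\mathbf{P}(\mathbf{x} \in B_{\mathbf{k},\mathbf{j}}) \bigr| \to 0$ as $m \to \infty$.

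This last estimate is the main obstacle: a generic continuous density on $\mathbb{R}^p$ need not be uniformly continuous, so the per-box oscillation bound does not by itself yield a vanishing total. I would handle it by truncation. Given $\epsilon > 0$, pick a compact set $K$ (chosen as a finite union of unit cubes, so that it is automatically a union of pair-boxes for every integer $m$) with $\mathbf{P}(\mathbf{x} \notin K) < \epsilon$. On $K$ the density is uniformly continuous, hence for $m$ sufficiently large every $\omega_{\mathbf{k},\mathbf{j}}$ with $B_{\mathbf{k},\mathbf{j}} \subset K$ is at most $\epsilon/\mathrm{vol}(K)$, and the partial sum over such boxes is at most $\mathrm{vol}(K)\, m^p \cdot (\epsilon/\mathrm{vol}(K)) \cdot (1/2m)^p = 2^{-p}\epsilon$. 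For the pair-boxes contained in $K^c$, which together carry mass less than $\epsilon$, the triangle inequality together with the trivial bound $\mathbf{P}(\mathbf{x} \in B^*) \le \mathbf{P}(\mathbf{x} \in B)$ shows their net contribution is at most $(1+2^{-p})\,\mathbf{P}(K^c) < 2\epsilon$. Combining, $|\mathbf{P}(\bigcap_i A_i) - 2^{-p}| \le C\epsilon$ for all $m$ large enough, and sending $\epsilon \to 0$ completes the proof. The $p = 1$ specialization of this scheme recovers exactly \cref{thm:preliminary}, which makes the $p$-dimensional extension transparent.
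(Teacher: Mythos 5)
Your proposal is correct, but it takes a genuinely different route from the paper's. The paper first proves a compact-support version (\cref{thm:preliminary_p_bounded_version}) via marginal and conditional densities: it restricts to the event that the first marginal density is bounded below by $1/N_0$, uses uniform continuity of the joint density on the ball to show the conditional law of the next coordinate puts nearly equal mass on the green and non-green halves, deduces $\liminf_{m\to\infty}\mathbf{P}(\bigcap_i A_i)\ge 2^{-p}$ for each of the $2^p$ sign patterns, and upgrades these to exact limits because the $2^p$ probabilities sum to one; the unbounded case is then handled by conditioning on $\|x\|_2\le R$. You instead tile $\mathbb{R}^p$ by product pair-boxes of side $1/m$, note that the joint event meets each box in a single sub-box of relative volume $2^{-p}$, bound the per-box error by the oscillation of the density, and truncate to a finite union of unit cubes to restore uniform continuity --- in effect a direct $p$-dimensional generalization of the comparison between $g_k(j)$ and $h_k(j)$ in the paper's one-dimensional proof of \cref{thm:preliminary}, rather than a bootstrap through conditionals. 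Your route is more symmetric and more elementary: it treats all $2^p$ patterns at once, avoids conditional densities and the liminf-sum-to-one device, and yields an explicit quantitative bound of the form $2^{-p}\epsilon+(1+2^{-p})\mathbf{P}(K^c)$ governed by the modulus of continuity of the density on the truncation set. What the paper's argument buys is mainly the reuse of the already-proved one-dimensional \cref{thm:preliminary} as a black box; both proofs ultimately rest on the same two ingredients, truncation to a compact set and uniform continuity there. If you write yours up, make two small points explicit: the box boundaries are Lebesgue-null and hence carry no probability (the law is absolutely continuous), and the identification of the distinguished sub-box uses only that exactly one half of each pair interval lies in $G$, irrespective of which half the green-list selection chose.
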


\begin{remark} \cref{thm:preliminary_p_dimensional} implies that when $m$ is large enough, $\{T_i\}_{i=1}^p$ are independent random variables. Of note, the independence shown above \textit{does not} require independence of the data distribution, making this statement especially non-trivial; it holds for any continuous density functions. We can see from the proof in \cref{appx:proof_p_dim} that the independence originates from our design of the watermarking process, and is induced by sufficiently large $m$.  
\end{remark}

Consequently, we can establish that as \(n\) approaches infinity, the sum of squared standardized deviations of \(T_j\) converges to a chi-squared distribution by definition:
\begin{align}
\sum_{j=1}^{p} \left[2\sqrt{n} \left(\frac{T_j}{n} - \frac{1}{2}\right)\right]^2 \to \chi^2_p.
\end{align}
Of note, in practical scenarios, the specific entries that are watermarked within the table could remain unknown. Consequently, it is imperative to consider all columns uniformly, which results in the chi-squared testing as detailed above.
Another practical concern is that sometimes it is possible to encounter datasets with high dimensionality, \ie, large $p$. We provide the following asymptotic result indicating that the $\chi_p^2$ statistics remain valid even when the dimension $p$ goes to infinity, as long as $p$ and $n$ goes to infinity with certain rates:
\begin{theorem}
\label{thm:high_dimensional}
Assume that $\{ T_i\}_{i=1}^p$ i.i.d. follows $B(n,\frac{1}{2})$, then as $n\to \infty$, if $p=o(n^\frac{2}{7})$, we have
\begin{align}
\sum_{j=1}^{p} \left[2\sqrt{n} \left(\frac{T_j}{n} - \frac{1}{2}\right)\right]^2 \xrightarrow{d} \chi^2_p.
\end{align}
still holds even if $p \to \infty$.
\end{theorem}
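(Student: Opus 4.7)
The plan is to bound the Kolmogorov distance between the distributions of $S_n := \sum_{j=1}^p W_j^2$ (with $W_j := (2T_j-n)/\sqrt{n}$) and of $\chi_p^2$ by comparing their characteristic functions and then applying Esseen's smoothing inequality. The strategy exploits that each individual $W_j^2$ is already very close, at a $1/n$ rate, to a $\chi_1^2$ variable at the level of moments, and that both characteristic functions have sufficient modulus decay so that raising them to the $p$-th power inflates the error in a controllable way.

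First I would expand the moments of $W_j^2$ combinatorially. Since $W_j$ is a normalized Rademacher sum, the even moments are elementary: $E[W_j^2] = 1$ exactly, $E[W_j^4] = 3 - 2/n$, and in general $E[W_j^{2k}] = (2k-1)!! + c_k/n + O(1/n^2)$. Hoeffding's inequality also shows $W_j$ is uniformly sub-Gaussian, so moment bounds are independent of $n$. From these expansions I would derive $|\phi_{W_j^2}(t) - \phi_{\chi_1^2}(t)| \leq C(1 + |t|^K)/n$ on a bounded interval $|t| \leq T_0$, the leading correction being the $-t^2/n$ coming from $\mathrm{Var}(W_j^2) = 2 - 2/n$. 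Lifting to the $p$-fold product via $|a^p - b^p| \leq p\, \max(|a|,|b|)^{p-1} |a-b|$ and using $|\phi_{\chi_1^2}(t)| = (1 + 4t^2)^{-1/4}$ then yields an estimate of the form $|\phi_{S_n}(t) - \phi_{\chi_p^2}(t)| \leq C\, p\, e^{-c p t^2} (1 + |t|^K)/n$ in a neighborhood of the origin.

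The final step is to feed this estimate into Esseen's smoothing inequality
\begin{equation*}
\sup_x |F_{S_n}(x) - F_{\chi_p^2}(x)| \leq \frac{1}{\pi}\int_{-T}^T \frac{|\phi_{S_n}(t) - \phi_{\chi_p^2}(t)|}{|t|}\, dt + \frac{C}{T} \sup_x f_{\chi_p^2}(x),
\end{equation*}
where Stirling's formula gives $\sup_x f_{\chi_p^2}(x) = O(1/\sqrt{p})$. After a rescaling $t \mapsto u/\sqrt{p}$ the integral contracts to a quantity of the form $p^{\alpha}/n$ with $\alpha$ determined by $K$, while the boundary term contributes $O(1/(T\sqrt{p}))$; the optimal choice of $T$ balances the two and produces an overall bound that vanishes precisely when $p = o(n^{2/7})$. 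The main obstacle will be exactly this final balancing: one must push the Taylor-type bound on $|\phi_{W_j^2} - \phi_{\chi_1^2}|$ to the largest possible range of $t$, handle the tail region $|t| \geq T_0$ separately via direct decay of $|\phi_{S_n}|$ and $|\phi_{\chi_p^2}|$ (a step complicated by the lattice nature of $W_j^2$, which prevents $|\phi_{W_j^2}|$ from being bounded strictly below $1$ globally), and carefully track every power of $p$ that appears after the change of variables so that the exponent $2/7$ emerges from the optimization rather than some strictly worse constraint.
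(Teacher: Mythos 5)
Your route is genuinely different from the paper's. The paper treats this as a multivariate CLT problem: it writes $(T_1-\frac n2,\dots,T_p-\frac n2)$ as a sum of $n$ i.i.d.\ $p$-dimensional vectors and invokes the Bentkus (2005) Berry--Esseen bound over all convex sets, $\Delta(\mathcal C)\le c\,p^{1/4}\beta$ with $\beta\le 8p^{3/2}/n^{1/2}$; since balls are convex, the chi-square statistic inherits the bound $O(p^{7/4}/n^{1/2})$, and $p=o(n^{2/7})$ is precisely what makes this vanish. You instead attack the scalar statistic directly through characteristic functions and Esseen smoothing. That plan is workable: $E[W_j^{2k}]$ is always dominated by $(2k-1)!!$ and matches it to relative order $O(k^3/n)$, both moment series converge for $|t|<1/2$, so $|\phi_{W_j^2}(t)-\phi_{\chi_1^2}(t)|\le Ct^2/n$ on a fixed small interval, while $|\phi_{W_j^2}(t)|^2\le 1-t^2\operatorname{Var}(W_j^2)+O(t^4)\le e^{-t^2/2}$ there, giving the $e^{-cpt^2}$ factor after lifting to the $p$-th power. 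Plugging into Esseen with $T$ a \emph{fixed} small constant yields $O\bigl(\frac{p}{n}\cdot\frac1p\bigr)+O\bigl(\frac{1}{T\sqrt p}\bigr)=O\bigl(\frac1n+\frac{1}{\sqrt p}\bigr)$. Carried out, your argument therefore proves something \emph{stronger} than the paper's theorem: no rate relation between $p$ and $n$ is needed beyond $p\to\infty$, essentially because you only need the law of one quadratic functional rather than uniformity over all convex sets.

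The one genuine error is your final paragraph: the exponent $2/7$ will not ``emerge from the optimization'' of your bound, because it is an artifact of the convex-sets route, not of the problem. Correspondingly, the obstacles you brace for are largely illusory. You never need to push the characteristic-function comparison beyond a fixed $T_0$, so the lattice nature of $W_j^2$ and the lack of global decay of $|\phi_{W_j^2}|$ never enter; the smoothing boundary term $O(1/(T_0\sqrt p))$ already vanishes because $\sup_x f_{\chi^2_p}(x)=O(p^{-1/2})$ and $p\to\infty$. If you chase a $p$-dependent $T$ you will only create difficulties you do not need to solve. The remaining real work is the uniform moment estimate $\bigl|E[W_j^{2k}]/(2k-1)!!-1\bigr|\le C\min(1,k^3/n)$, which follows from the multinomial expansion of $E[(\sum_i\epsilon_i)^{2k}]$ together with the term-by-term domination $E[W_j^{2k}]\le(2k-1)!!$; the factor $|t|^k(2k-1)!!/k!\le 2^{-k}$ at $|t|\le 1/4$ then makes the series summable and produces the $O(t^2/n)$ bound you assert.
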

\color{black}

\section{Robustness of the Tabular Data Watermark}
In this section, we further examine the robustness of the proposed watermark when exposed to attacks. We assume that the attacker has no knowledge about the ``green list'' intervals. Since our detection framework is based on hypothesis-testing, the ultimate goal of this attack can be regarded as increasing p-value as much as possible. 

Specifically, we consider a scenario where the attacker alters \( k_i \) elements from green-listed to non-green-listed in the \(i\)-th column of a fully watermarked tabular dataset. The resultant chi-square statistic for this modification is 
$\sum_{j=1}^p 4n \left(\frac{1}{2} - \frac{k_i}{n}\right)^2$. 
Therefore, to assess the robustness of the proposed watermark, we identify the minimum of \(\sum_{j=1}^p k_i\) such that:
\begin{equation}
\label{equ:chi_stat_change}
\sum_{j=1}^p 4n \left(\frac{1}{2} - \frac{k_i}{n}\right)^2 \leq \chi^2_p(1-\alpha),
\end{equation}
where \(\chi^2_p(1-\alpha)\) is the \((1-\alpha) \%\) quantile of the chi-square distribution with \(p\) degrees of freedom. Typically, \(1-\alpha\) is set as \(0.95\). \cref{equ:chi_stat_change} quantifies the minimum number of elements required for a successful attack, \ie, increasing the p-value so that it is higher than $\alpha$.
By Mean Squared-Arithmetic Inequality, to achieve this target, we require:
\begin{align}
\frac{2 \sqrt{n} \sum_{j=1}^{p} \left(\frac{1}{2} - \frac{k_i}{n}\right)}{p} \leq  \sqrt{\frac{\sum_{j=1}^{p} 4n\left(\frac{1}{2}-\frac{k_i}{n}\right)^2}{p}} \leq \sqrt{\frac{\chi^2_p(1-\alpha)}{p}},
\end{align}
which consequently implies that:
\begin{equation}
\label{equ:minimum_k}
\sum_{j=1}^{p} k_j \geq \frac{1}{2}(np) - \frac{1}{2}\sqrt{np}\sqrt{\chi^2_p(1-\alpha)}.    
\end{equation}

\cref{equ:minimum_k} demonstrates a lower bound of the number of elements to be moved out of ``green list'' intervals for a successful attack. 

To further assess the robustness of our watermarking framework, we consider the common choice of attacking with additive noise, since a wide range of attacks can be generally regarded as adding noise to the watermarked data; different noise distributions correspond with different attacking strategies. We start with examining how these attacks influence the distribution of green-listed elements. The probability that this additive noise successfully moves an element out of the ``green list'' intervals (a.k.a. attack success rate) is specified by the following theorem:
\begin{theorem}[Attack success rate]
\label{attack_success_rate}
Given noise \(\epsilon\) following a (not necessarily zero mean) distribution \(A\) with a continuous probability density function, for any \(x_w\) whose fractional part lies within a green list interval, as \(m \to \infty\),
\[
P_{\epsilon \sim A}(x_w+\epsilon-i(x_w+\epsilon) \notin \textit{G}) \to \frac{1}{2},
\]
where \(i(\cdot)\) is the integer part, \ie, \(x_w+\epsilon \in [i(x_w+\epsilon), i(x_w+\epsilon)+1)\).
\end{theorem}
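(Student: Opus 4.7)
The plan is to reduce the statement to a direct corollary of Lemma~\ref{thm:preliminary}. First I would treat $x_w$ as a fixed deterministic constant, so that the only randomness in $x_w + \epsilon$ comes from $\epsilon \sim A$. Define $y := x_w + \epsilon$. Since $\epsilon$ has a continuous density $f_A$, the translated random variable $y$ also has a continuous density, namely $t \mapsto f_A(t - x_w)$. Thus $y$ is distributed according to some continuous density on $\mathbb{R}$, which is precisely the hypothesis needed to invoke Lemma~\ref{thm:preliminary}.

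Next I would apply Lemma~\ref{thm:preliminary} to the distribution of $y$. That lemma asserts that for any random variable with a continuous density, the probability of its fractional part landing in $G$ tends to $1/2$ as $m \to \infty$. Substituting $y = x_w + \epsilon$, this gives
$$P_{\epsilon \sim A}\bigl(x_w + \epsilon - i(x_w + \epsilon) \in G\bigr) \to \tfrac{1}{2}.$$
Taking the complement yields the claimed limit
$$P_{\epsilon \sim A}\bigl(x_w + \epsilon - i(x_w + \epsilon) \notin G\bigr) = 1 - P_{\epsilon \sim A}\bigl(x_w + \epsilon - i(x_w + \epsilon) \in G\bigr) \to \tfrac{1}{2}.$$

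The only subtle point, and therefore the ``hard part'' as far as there is one, is justifying that the application of Lemma~\ref{thm:preliminary} is legitimate uniformly in the choice of green list $G$ and in the shift $x_w$. For $G$, the remark immediately following Lemma~\ref{thm:preliminary} already asserts that the convergence holds consistently across all admissible green list choices, so the specific $G$ produced by the watermarking procedure causes no difficulty. For $x_w$, note that it enters only as a translation of the density of $\epsilon$, and Lemma~\ref{thm:preliminary} has no quantitative dependence on which continuous density is used. Finally, I would observe that the hypothesis ``$x_w$'s fractional part lies in a green list interval'' is actually not invoked in the derivation; it merely fixes the interpretation of the event as an ``attack success'' (moving an originally green-listed element out of $G$). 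With these observations, the theorem follows as a one-line consequence of Lemma~\ref{thm:preliminary}.
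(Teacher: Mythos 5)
Your proposal is correct and matches the paper's own argument: both treat $x_w$ as fixed, observe that $x_w+\epsilon$ has a continuous (translated) density, invoke Lemma~\ref{thm:preliminary} to get convergence to $\tfrac{1}{2}$ for the in-$G$ event, and pass to the complement. Your added remarks on uniformity over $G$ and on the unused hypothesis about $x_w$ are accurate refinements of the same one-line reduction.
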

\begin{proof}
The idea for proving \cref{attack_success_rate} is as follows. Given $x_w$, $x_w+\epsilon$ is a random variable with a continuous density. In \cref{thm:preliminary} we have proved that the probability of any random variable with a continuous density falling into the ``green list'' intervals converges to $\frac{1}{2}$. Symmetrically, the probability of any random variable with a continuous density not falling into the ``green list'' intervals also converges to $\frac{1}{2}$. Applying this to $x_w+\epsilon$, we finish the proof of \cref{attack_success_rate}.
\end{proof}

Motivated by the preceding discussions, we would like to address a crucial scenario: if each element within the ``green list'' intervals has an upper-bounded attack success probability of \( q \leq \frac{1}{2} \) (note that $q \to \frac{1}{2}$ regardless of the distribution of $\epsilon$ as $m \to \infty$), how many elements must be attacked to ensure a p-value higher than $\alpha$, thereby indicating a successful attack? 

We therefore formalize the following theorem:

\begin{theorem}[Robustness]
\label{thm:robustness}
Consider a \( n \times p \) table \(\mathbf{X}\) with all elements initially in the ``green list'' intervals. If the prob. of successfully attacking each element is no more than \( \frac{1}{2} \), and \( \widehat{k}_i \) denotes the number of elements attacked in the \(i\)-th column, then an attack will fail---meaning the calculated p-value will be $\alpha$ or higher---with at least probability \( 1-e^{-\frac{1}{2}(\sqrt{np}-\sqrt{\mathcal{X}_p^2(1-\alpha)})} \) if:
\[
\sum_{j=1}^p \widehat{k}_j \leq \frac{1}{1+\frac{1}{(np)^{\frac{1}{4}}}}(np-\sqrt{np}\sqrt{\mathcal{X}_p^2(1-\alpha)}).
\]
\end{theorem}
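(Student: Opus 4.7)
The plan is to reduce the claim to a one-sided tail bound on a single binomial random variable, combined with the threshold characterization from \cref{equ:minimum_k}. Let $k_j$ denote the number of column-$j$ cells the attacker actually flips out of a green list interval (as opposed to $\widehat{k}_j$ cells merely targeted). Because each individual attempt succeeds with probability at most $\tfrac{1}{2}$ and the additive noise acts independently across cells, $K := \sum_{j=1}^p k_j$ is stochastically dominated by $Z \sim \mathrm{Bin}(\widehat{K},\tfrac{1}{2})$, where $\widehat{K} = \sum_j \widehat{k}_j$.

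Next, by \cref{equ:minimum_k} the $\chi^2_p$-statistic dips below $\chi^2_p(1-\alpha)$ only if $K \geq \tau := \tfrac{1}{2}\bigl(np - \sqrt{np}\sqrt{\chi^2_p(1-\alpha)}\bigr)$; equivalently, the attack fails whenever $K < \tau$. Stochastic dominance then gives $\mathbf{P}(\text{attack fails}) \geq 1 - \mathbf{P}(Z \geq \tau)$, so the remaining task is to bound $\mathbf{P}(Z \geq \tau)$ from above by $\exp\bigl(-\tfrac{1}{2}(\sqrt{np}-\sqrt{\chi^2_p(1-\alpha)})\bigr)$.

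I would apply the one-sided Hoeffding/Chernoff inequality $\mathbf{P}(Z \geq \tau) \leq \exp\!\bigl(-(2\tau-\widehat{K})^2/(2\widehat{K})\bigr)$. Inserting the hypothesis $\widehat{K} \leq 2\tau/(1+(np)^{-1/4})$ yields $2\tau - \widehat{K} \geq 2\tau/((np)^{1/4}+1)$, and after using $\tau = \tfrac{1}{2}\sqrt{np}\bigl(\sqrt{np}-\sqrt{\chi^2_p(1-\alpha)}\bigr)$ the $\sqrt{np}$ factors cancel, leaving an exponent of the form $(\sqrt{np}-\sqrt{\chi^2_p(1-\alpha)})/\bigl(2(1+(np)^{-1/4})\bigr)$; this is exactly the advertised bound up to the deflation $1/(1+(np)^{-1/4})$, which the $(np)^{1/4}$ slack in the hypothesis on $\sum_j \widehat{k}_j$ is precisely designed to absorb.

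The main obstacle is nailing the constants down cleanly. Plain Hoeffding leaves behind the deflation factor $1/(1+(np)^{-1/4})$ relative to the cleaner target $\tfrac{1}{2}(\sqrt{np}-\sqrt{\chi^2_p(1-\alpha)})$, so closing this gap requires either (i) the tighter relative-entropy Chernoff form $\mathbf{P}(Z \geq \tau) \leq \exp(-\widehat{K}\,d(\tau/\widehat{K}\,\Vert\,\tfrac{1}{2}))$ with a Taylor expansion of $d$ around $\tfrac{1}{2}$, or (ii) a careful book-keeping that trades the $(np)^{-1/4}$ buffer built into the hypothesis for the missing factor in the exponent. A smaller point to verify is the independence that legitimises the binomial reduction, which follows from the assumption that the attacker's additive noise acts element-wise with i.i.d.\ coordinates.
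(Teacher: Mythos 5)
Your proposal follows essentially the same route as the paper's proof: reduce the count of successfully flipped cells to a $\mathrm{Bin}(\widehat{K},\tfrac{1}{2})$ tail, invoke the threshold $\tau=\tfrac{1}{2}\bigl(np-\sqrt{np}\sqrt{\chi^2_p(1-\alpha)}\bigr)$ from \cref{equ:minimum_k}, and apply one-sided Hoeffding with slack $\delta=(np)^{-1/4}$. The residual deflation factor $1/(1+(np)^{-1/4})$ in the exponent that you flag as an obstacle is not something you are missing relative to the paper: the paper's own Hoeffding computation yields exactly the same exponent $\tfrac{1}{2}\bigl(\sqrt{np}-\sqrt{\chi^2_p(1-\alpha)}\bigr)/\bigl(1+(np)^{-1/4}\bigr)$ and simply states the undeflated constant, so your derivation is, if anything, the more careful account of what the argument actually delivers.
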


\begin{remark}
\cref{thm:robustness} underscores that if the success probability of attacking an individual element is capped at \( \frac{1}{2} \), then even attacking \((1+o(1))np\) elements is insufficient to significantly increase the likelihood of overcoming the hypothesis test. This result implies that an extensive number of targeted attacks is required to disrupt the hypothesis-testing mechanism effectively.
\end{remark}

\section{Experiments}
\label{sec:exps}
We now empirically evaluate our proposed tabular data watermark regarding \textbf{i) fideliy}, \textbf{ii) detection rate} and \textbf{iii) robustness} on synthetic and real-world datasets. We kindly refer to \cref{appx:exp_set} for additional experiment settings, results and discussions.

\subsection{Synthetic Dataset Examples} 
\label{sec:toy_data}
We first evaluate our method with synthetic data as a quick sanity check to validate our theoretical results. As the proof-of-concept experiments, we use Gaussian data to show that our framework can indeed greatly maintain data fidelity, demonstrate satisfying detection rates and achieve appealing robustness against attacks with additive noise. We set $m=1000$ for the following experiments.

\paragraph{Fidelity}
We start with evaluating the impact of our tabular data watermark on data fidelity with single-column data. Specifically, we draw a $2000 \times 1$ table from standard Gaussian to embed our proposed watermark. We can see from the kernel density estimation results in \cref{subfig:kde_syn_before,subfig:kde_syn_after} that our proposed watermark has negligible impact on the original data distribution, consistent with our statement in \cref{thm:fidelity,corollary:wasserstein}. We provide quantitative results on real datasets in \cref{tab:wasserstein_dist} and \cref{appx:add_exp}. We further evaluate the impact of our watermark on correlated multi-column data. To enforce the correlation between columns, we iteratively generate each column as 
$X_{j + 1} = 1.1 X_{j} + \epsilon$ if $j$ is odd, or $X_{j + 1} = X_{j} / 1.1 + \epsilon$ if $j$ is even, $j=1,...,p$. $X_{j}$ denotes the $j$-th column data, and $\epsilon \sim N(0, I_n)$. In this experiment, we construct a $10000 \times 1000$ table, and calculate the correlation matrices before and after applying our tabular data watermark ($m=1000$) to probe the impact. We can see from \cref{subfig:corr_no_wm,subfig:corr_wm_all,subfig:corr_diff} that the proposed watermark demonstrates marginal influences on the statistical relation among columns, with a maximum absolute difference of correlation values of $\sim 0.01$.

\begin{wrapfigure}{r}{0.53\textwidth}
    \vspace{-20pt}
    \centering
    \subfigure[KDE before wm.]{
        \includegraphics[width=0.45\linewidth]{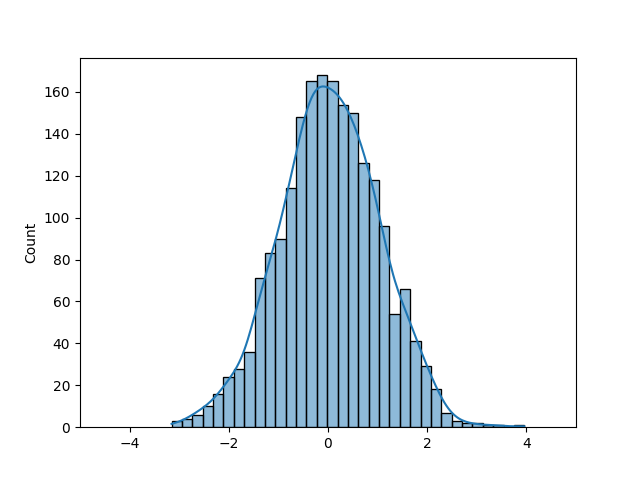}
        \label{subfig:kde_syn_before}
    }
    \subfigure[KDE after wm.]{
        \includegraphics[width=0.45\linewidth]{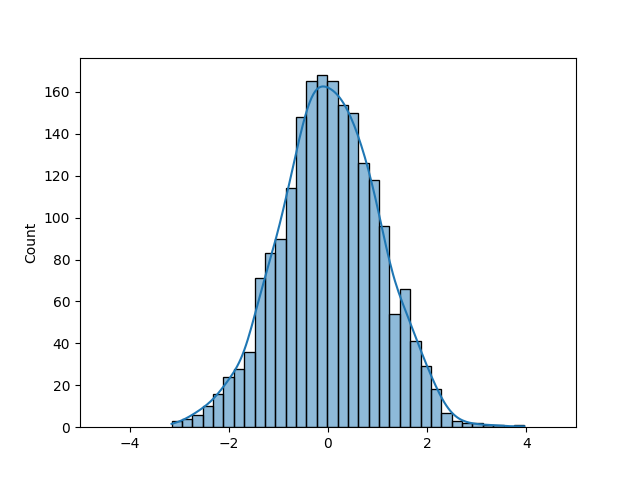}
        \label{subfig:kde_syn_after}
    } 
    \caption{\textbf{KDE plots for the Gaussian data w/ and w/o our proposed watermark;} \texttt{wm} as the shorthand of our watermark; figs and tabs henceforth follows this format.}
    \centering
    \subfigure[corr. w/o wm.]{
        \includegraphics[width=0.29\linewidth]{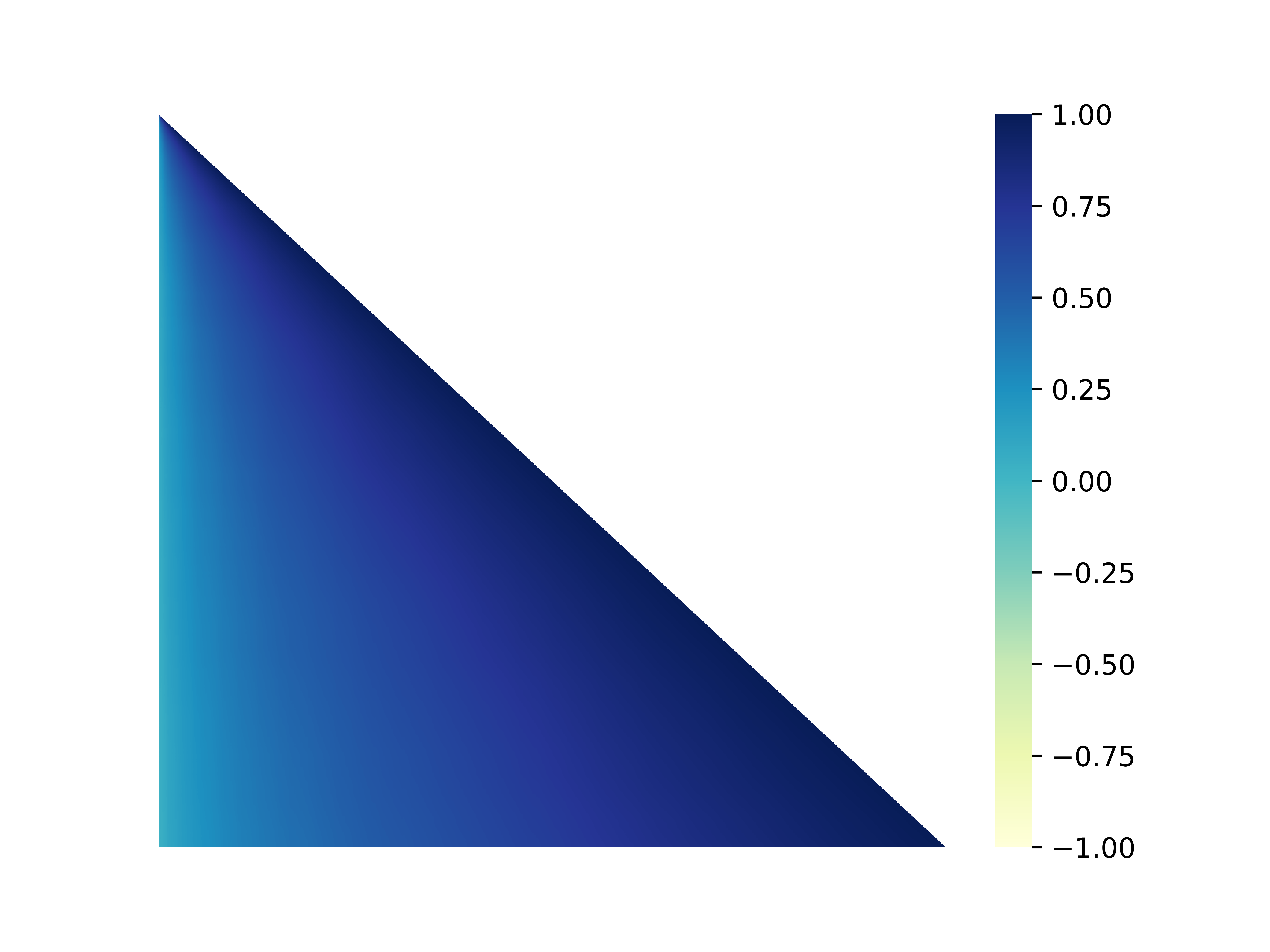}
        \label{subfig:corr_no_wm}
    }
    \subfigure[corr. w/ wm.]{
        \includegraphics[width=0.29\linewidth]{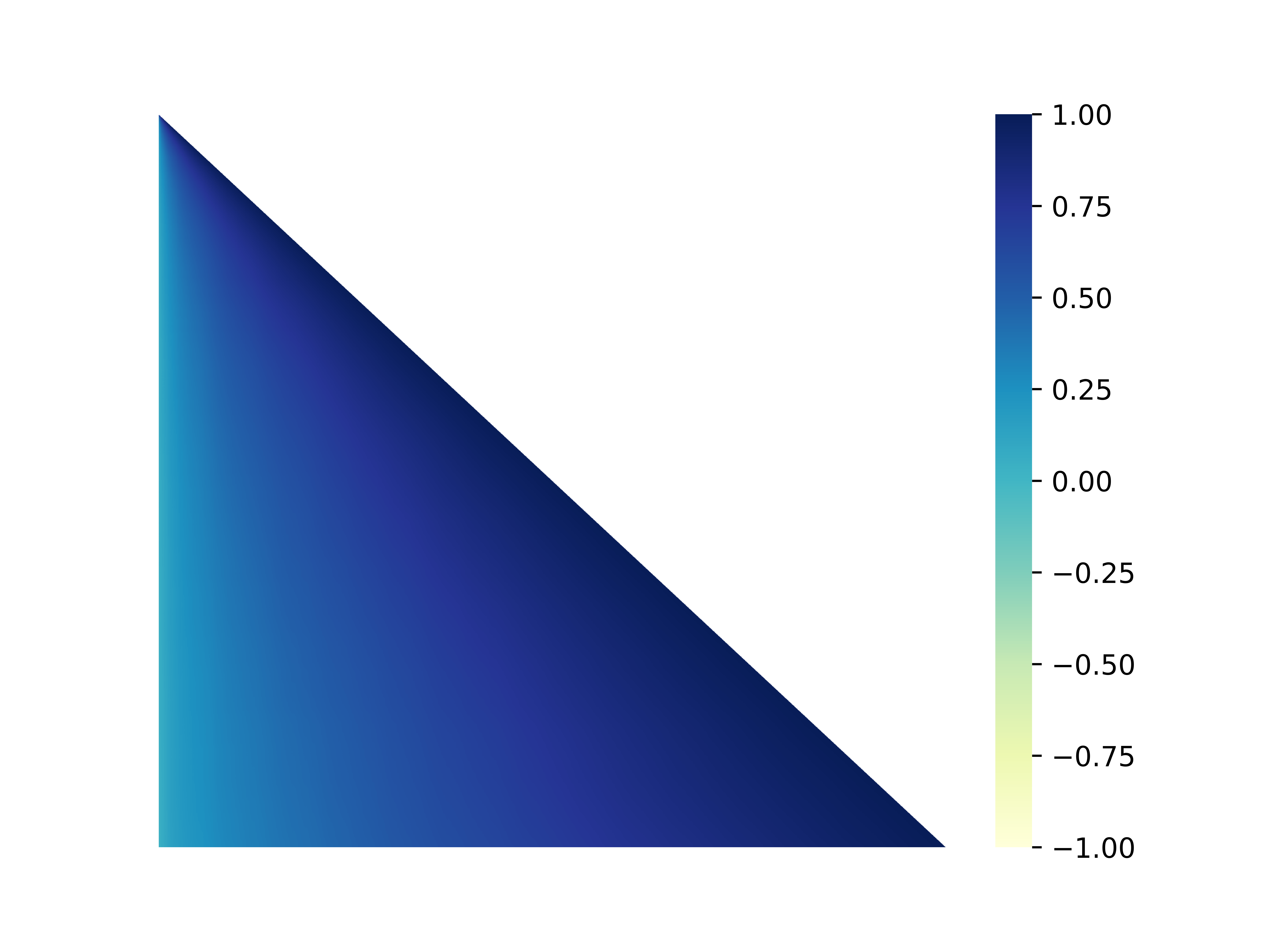}
        \label{subfig:corr_wm_all}
    }
    \subfigure[Diff. of corr.]{
        \includegraphics[width=0.29\linewidth]{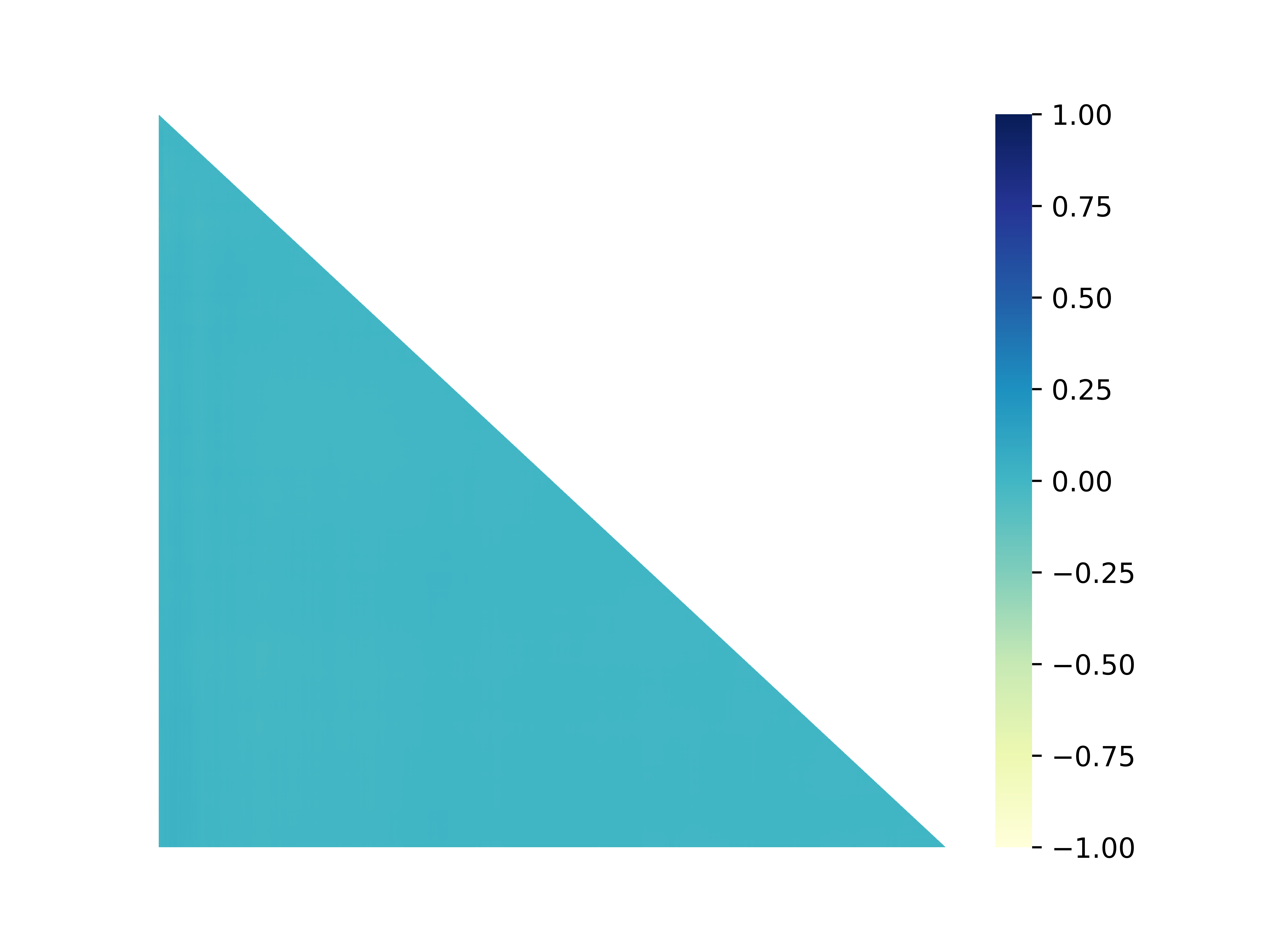}
        \label{subfig:corr_diff}
    }
    \caption{\textbf{Visualization of correlation matrices} and the difference between the correlation matrices w/ and w/o applying our proposed watermark. Zoom-in for more details.}
    \label{fig:viz_kde_corr_mat}
    \vspace{-10pt}
\end{wrapfigure}

\paragraph{Detection rate (True Postive Rate)} For multi-column tabular data, we consider two scenarios including i) adding the watermark to only one column, as a stress test to examine the effectiveness of our approach with extremely limited computational resources, and ii) adding the watermark to all columns in the table, which is the standard case. We evaluate the detection rates when applying our watermark to tables with different number of rows and columns (see \cref{subfig:chi_sing_col,subfig:chi_mult_col}). The true negative rate is 1 in all settings. We refer to \cref{appx:syn_dat,appx:add_exp} for details and ROC-AUC scores. In \cref{subfig:chi_sing_col}, we observe that the watermark is still largely detectable even when only one column is watermarked. We can see that the detection rate under this particular circumstance is high as long as the number of rows is sufficiently large. In \cref{subfig:chi_mult_col}, the detection rate is constantly high regardless of the size of the table, confirming the effectiveness of our approach. We refer to \cref{appx:add_exp} for additional results of simulating high-dimensional tables, where the column number $p$ exceeds the row number, \eg $p = 100n$. Our watermark can still be effectively detected with near perfect rates.

\begin{figure}[!h]
    \vspace{-7pt}
    \centering
    \subfigure[wm. one col.]{
        \includegraphics[width=0.22\linewidth]{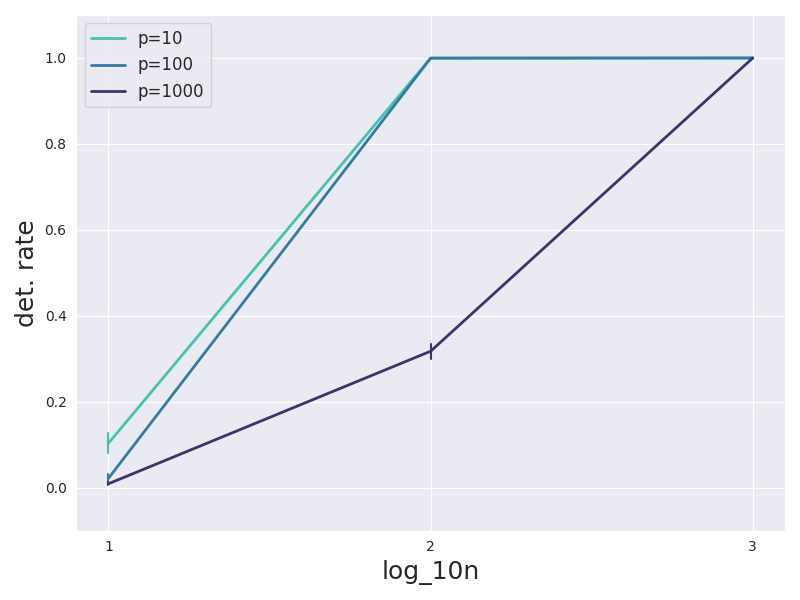}
        \label{subfig:chi_sing_col}
    }
    \subfigure[wm. all cols.]{
        \includegraphics[width=0.22\linewidth]{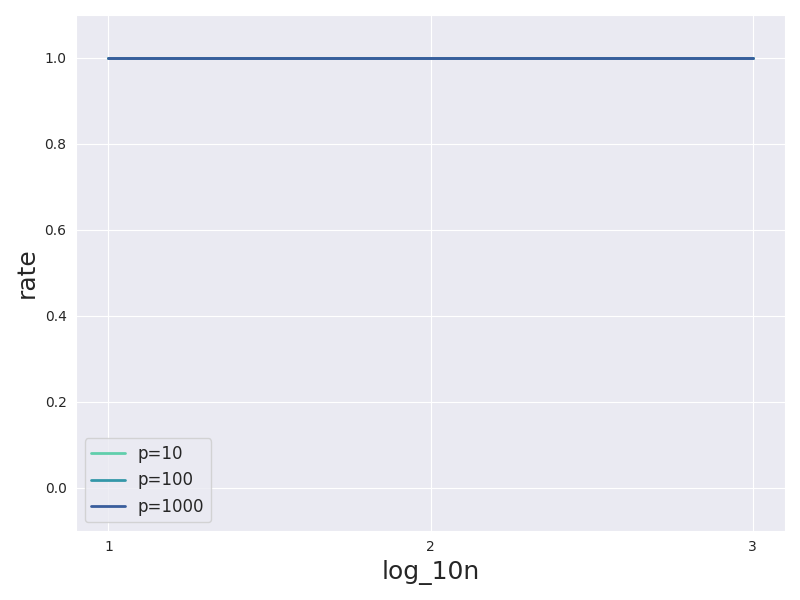}
        \label{subfig:chi_mult_col}
    }
    \subfigure[wm. one col. (atck)]{
        \includegraphics[width=0.22\linewidth]{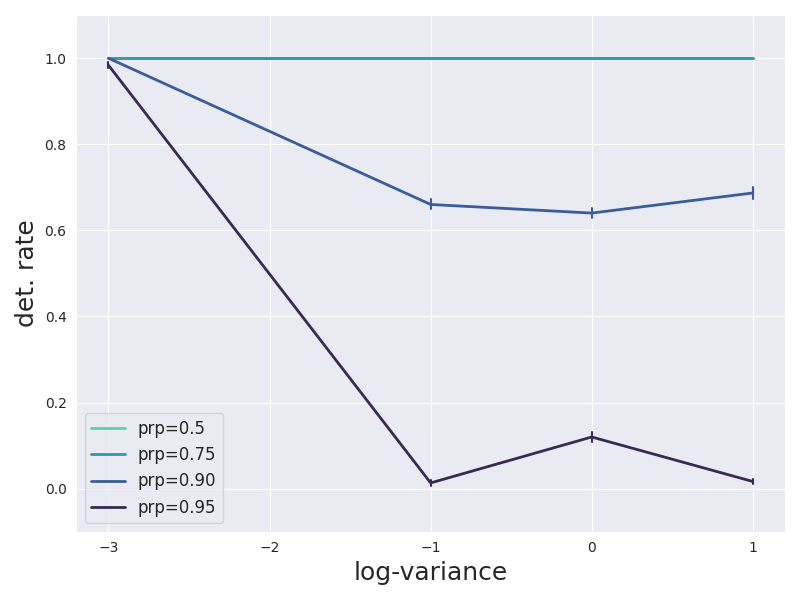}
        \label{subfig:chi_sing_col_atck}
    }
    \subfigure[wm. all cols. (atck)]{
        \includegraphics[width=0.22\linewidth]{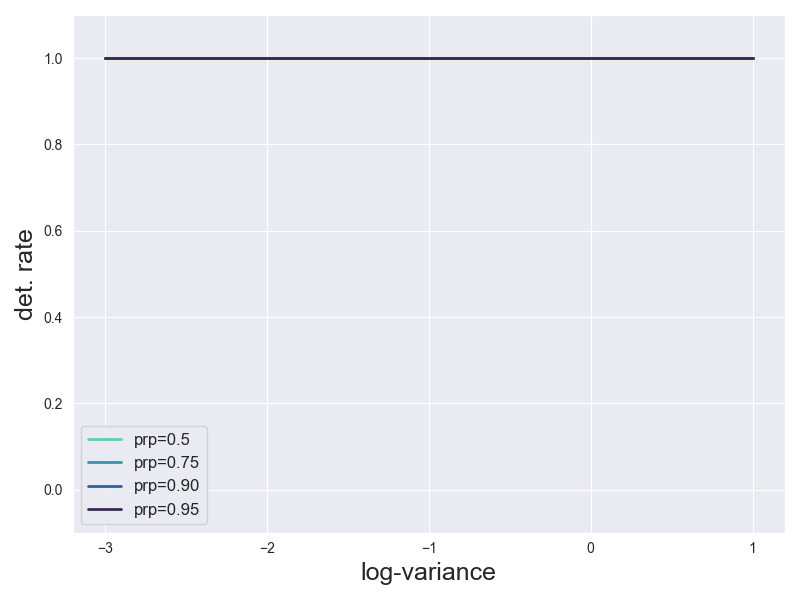}
        \label{subfig:chi_mult_col_atck}
    }
    \caption{\textbf{Detection rates of the proposed watermark applied to tabular data with different number of rows and columns.} In \cref{subfig:chi_sing_col_atck,subfig:chi_mult_col_atck} we plot the detection rates of the watermark after adding noises with different level of variances (in $\log_{10}$ scale). \texttt{prp} is the proportion of the elements in a table being modified. Rates over $1000$ independent samples; error bars over 3 runs. Zoom-in for more details.}
    \label{fig:plot_det_rate}
    \vspace{-7pt}
\end{figure}

\paragraph{Robustness} 
In these experiments, we apply the watermark to tables of size $5000 \times 100$ as the representative, and then add Gaussian noises with different variances to perturb different proportion of the watermarked data. In \cref{subfig:chi_sing_col_atck}, it is demonstrated that watermarking just a single column (approx. 1\% of the original data) already has decent robustness when 75\%
of the elements are modified by the attacker. In \cref{subfig:chi_mult_col_atck}, the detection rate is constantly high regardless of the variance of the added noise or the proportion (can be as high as $\sim95\%$) of the elements being modified, as long as all columns are watermarked. The result in \cref{subfig:chi_mult_col_atck} is particularly encouraging as the variance of the added noise can be as large as 10, while the variance of data distribution is only approximately 1. The results support our theoretical analysis in \cref{thm:robustness}.

\subsection{Results on Generative Tabular Data}
In this section, we extensively evaluate our watermarking framework on real-world datasets to check the effectiveness of our approach.

\paragraph{Datasets \& tab. generators} Specifically, we employ TVAE \cite{xu2019modeling}, CTABGAN \cite{zhao2021ctab} and TabDDPM \cite{kotelnikov2023tabddpm} as representatives of VAE-based \cite{kingma2013auto}, GAN-based \cite{goodfellow2014generative}, and DDPM-based \cite{ho2020denoising} tabular data generators to generate tabular data. For systematic investigation of our tabular data watermark performance on these tabular generative models, we consider a diverse set of 6 real-world public datasets with various sizes, nature, number of features, and their distributions; these datasets are commonly used for tabular model evaluation \cite{zhao2021ctab,gorishniy2021revisiting}. We provide additional details of datasets, evaluation measure, and tunning process for tabular data generators in \cref{appx:real_dat}.

\paragraph{Practical implementation} In practice, some columns in certain generated datasets can follow ill-shaped distributions, \eg, some distributions have spikes concentrated on certain values, which may violate the assumption of our framework (see \cref{appx:add_exp}). To be specific, from \cref{thm:preliminary} we know that as $m$ tends to infinity, the probability of an element falling within a ``green list'' interval converges to 1/2. 
The rate at which this convergence occurs, however, depends largely on the smoothness of the distribution. To address this issue, we therefore adopt a heuristic approach by selecting columns with relatively smooth data distributions to embed the watermark. 

Our heuristic approach assumes that for a column data distribution with enough smoothness, the probability of an element falling within a ``green list'' interval, denoted as $\widehat{p}$, lies with in $[\frac{1}{2} - \Delta, \frac{1}{2} + \Delta]$, when $m$ is not sufficiently large. We can therefore use the frequency $\widehat{f}\approx\widehat{p}$ of an element falling within ``green list'' intervals as an indicator of the distribution smoothness. Specifically, to filter out the columns with low smoothness, we set $\Delta = 0.01$ and $m$ to be within the range $\{1000, 1500, 2000, 2500, ..., 4500, 5000\}$ and count for each column how many times $\widehat{f}$ falls outside the range $[\frac{1}{2} - \Delta, \frac{1}{2} + \Delta]$ with different choices of $m$. Specifically, we sweep $m$ over its range, and repeat the experiment 5 times for each value of $m$. If the number exceeds 10\% of the total number ($5 \times 9$) of experiments, then we identify this column as a non-smooth column and discard it (see \cref{appx:add_exp} for further results of this filtering process; most generated datasets remain untouched). For the rest of these columns, we choose for each column the $m$ that maximize the number of times that $\widehat{f}$ falls inside the range $[\frac{1}{2} - \Delta, \frac{1}{2} + \Delta]$ to conduct the following experiments. For each column, we normalize the distribution to zero mean and unit variance before adding our proposed watermark.

\paragraph{Fidelity} 
We verify the distribution distance between the original generated data and watermarked data is indeed of $O(\frac{1}{m})$. In \cref{tab:wasserstein_dist}, we calculate the Wasserstein-1 distance between the empirical distribution generated by TabDDPM (see \cref{appx:add_exp} for results of TVAE and CTABGAN) and its watermarked version; we provide the distance between real data and generated data for reference.
\begin{table}[!htbp]
    \vspace{-5pt}
    \centering
    \caption{\textbf{Wasserstein-1 distance between generated data distribution and watermarked data distribution.}}
    \vskip 0.1in
    \begin{tabular}{ccccccc}
    \toprule
        {} & California & Gesture & House 
           & Wilt & Higgs-small & Miniboone \\
    \midrule
    Orig2Gen & 0.0222 & 0.0602 & 0.0315
             & 0.0767 & 0.0142 & 0.0161 \\
    Gen2Watermarked 
             & 0.0004 & 0.0004 & 0.0004 
             & 0.0005 & 0.0003 & 0.0001 \\
    \midrule
    $m$      & 1000   & 1000   & 1000   
             & 1000   & 1000   & 2500   \\
    $1 / m$  & 0.001 & 0.001   & 0.001  
             & 0.001 & 0.001   & 0.0004 \\
    \bottomrule
    \end{tabular}
    \label{tab:wasserstein_dist}
    \vspace{-5pt}
\end{table}

\paragraph{Accuracy of the watermarked tabular data}  
We summarize the effectiveness (measured by ROC-AUC scores) of our tabular data watermark on the generated tabular data. We can see in \cref{tab:det_real} that our method demonstrates desirable accuracies on these generated datasets.
\begin{table}[!htbp]
    \vspace{-5pt}
    \centering
    \caption{\textbf{Accuracy (ROC-AUC score) of the tabular data watermark.} }
    \vskip 0.1in
    \begin{tabular}{ccccccc}
    \toprule
        {} & California & Gesture & House 
           & Wilt & Higgs-small & Miniboone \\
    \midrule
    TVAE & 1.000 & 1.000 & 1.000 & 1.000 & 1.000 & 1.000 \\
    CTABGAN & 1.000 & 1.000 & 1.000 & 1.000 & 1.000 & 1.000 \\
    TabDDPM & 1.000 & 1.000 & 1.000 & 1.000 & 1.000 & 0.999 \\
    \bottomrule
    \end{tabular}
    \label{tab:det_real}
    \vspace{-5pt}
\end{table}

\paragraph{Robustness} To examine the robustness of our approach, we add zero mean Gaussian noise with the standard variance as $0.01\widehat{\sigma}$ to perturb the watermarked tabular data; $\widehat{\sigma}$ represents the standard variance of the watermarked tabular data. We choose this relatively small noise variance to make sure that the ML efficiency (or utility) \cite{xu2019modeling} of the generated data is not severely deteriorated by the attacks, while the added noise can distort the watermark as much as possible. This is consistent with most practical scenarios, where the attacker intends to remove the watermark as much as possible while preserving the original data information (\eg, \cite{kirchenbauer2023watermark,zhao2023invisible}). The noise are added to 95\% of all the elements in the watermarked tabular data. We can see from \cref{tab:det_robust} that our watermark can still be reliably detected on all the generated datasets. 
\begin{table}[!htbp]
    \vspace{-5pt}
    \centering
    \caption{\textbf{Accuracy (ROC-AUC score) of the tabular data watermark after additive noise attack.}}
    \vskip 0.1in
    \begin{tabular}{ccccccc}
    \toprule
        {} & California & Gesture & House 
           & Wilt & Higgs-small & Miniboone \\
    \midrule
    TVAE & 1.000 & 1.000 & 1.000 & 1.000 & 1.000 & 1.000 \\
    CTABGAN & 1.000 & 1.000 & 1.000 & 1.000 & 1.000 & 1.000 \\
    TabDDPM & 1.000 & 1.000 & 1.000 & 1.000 & 1.000 & 0.999 \\
    \bottomrule
    \end{tabular}
    \label{tab:det_robust}
    \vspace{-10pt}
\end{table}

\paragraph{Utility} To examine the impact of our watermark on the utility of the generated data, we follow the evaluation protocol in \cite{kotelnikov2023tabddpm} and train CatBoost classifiers \cite{prokhorenkova2018catboost} using the watermarked generated data and the original generated data and compare the performances. We can see in \cref{tab:utility} that our tabular data watermark has negligible impact on the utility of the synthesized data. 

\begin{table}[!htbp]
    \centering
    \caption{\textbf{Impact of the tabular data watermark on utility.} The metrics used for each dataset is provided after the dataset name. Results calculated using 20 generated copies of the original dataset.}
    \vskip 0.1in
    \begin{tabular}{ccccc}
    \toprule
    {} & {Generator} & California (R2) 
       & Gesture (F1) 
       & House (R2) \\
    \midrule
    {} & TVAE & $0.736 \pm 0.004$ 
         & $0.418 \pm 0.012$ 
         & $0.448 \pm 0.010$ \\
    orig. data & CTABGAN & $0.577 \pm 0.007$ 
            & $0.411 \pm 0.005$ 
            & $0.327 \pm 0.008$ \\
    {} & TabDDPM & $0.823 \pm 0.003$ 
                 & $0.575 \pm 0.009$ 
                 & $0.638 \pm 0.007$ \\
    \midrule
    \rowcolor{gray}
    {} & TVAE & $0.735 \pm 0.037$ 
         & $0.419 \pm 0.012$ 
         & $0.449 \pm 0.011$ \\
    \rowcolor{gray}
    wm. data & CTABGAN & $0.577 \pm 0.007$ 
            & $0.409 \pm 0.009$ 
            & $0.328 \pm 0.007$ \\
    \rowcolor{gray}
    {} & TabDDPM & $0.823 \pm 0.003$ 
                 & $0.554 \pm 0.007$ 
                 & $0.638 \pm 0.007$ \\
    \midrule
    {} & {Generator} & Wilt (F1) 
       & Higgs-small (F1) 
       & Miniboone (F1) \\
    \midrule
    {} & TVAE & $0.500 \pm 0.020$ 
              & $0.665 \pm 0.001$
              & $0.905 \pm 0.002$ \\
    orig. data & CTABGAN & $0.666 \pm 0.019$ 
                         & $0.602 \pm 0.004$ 
                         & $0.852 \pm 0.002$ \\
    {} & TabDDPM 
              & $0.892 \pm 0.017$ 
              & $0.713 \pm 0.002$ 
              & $0.931 \pm 0.001$ \\
    \midrule
    \rowcolor{gray}
    {} & TVAE & $0.494 \pm 0.020$ 
              & $0.665 \pm 0.001$ 
              & $0.905 \pm 0.002$ \\
    \rowcolor{gray}
    wm. data & CTABGAN 
             & $0.666 \pm 0.019$ 
             & $0.601 \pm 0.005$ 
             & $0.852 \pm 0.002$ \\
    \rowcolor{gray}
    {} & TabDDPM 
        & $0.886 \pm 0.013$ 
        & $0.713 \pm 0.002$ 
        & $0.930 \pm 0.001$ \\
    \bottomrule
    \end{tabular}
    \label{tab:utility}
\end{table}

\section{Conclusion}
This paper presents a new watermarking method for tabular data to ensure the fidelity of synthetic datasets. The approach embeds watermarks into finely segmented data intervals, using a "green list" technique to minimize distortion and retain high data fidelity. A robust statistical hypothesis-testing framework is then proposed allowing for reliable detection of the watermarks, even in the presence of additive noise with large variances.
Experimental results demonstrate the effectiveness of the technique, with near-perfect detection rates in terms of AUC. The watermarking process shows high robustness against Gaussian noise attacks while having minimal impact on data utility, indicating its usefulness in practical scenarios where ML efficency for downstream tasks is of primary concern.
This work contributes to enhancing the security of both synthetic and real-world datasets, which is critical in the context of AI and machine learning applications. Future research could focus on improving the robustness of the watermarking method and extending its applicability across different types of data, for example, the categorical data.

\newpage

\bibliographystyle{unsrtnat}
\bibliography{ref}

%%%%%%%%%%%%%%%%%%%%%%%%%%%%%%%%%%%%%%%%%%%%%%%%%%%%%%%%%%%%

\clearpage

\renewcommand\thefigure{A\arabic{figure}}
\setcounter{figure}{0}
\renewcommand\thetable{A\arabic{table}}
\setcounter{table}{0}
\renewcommand\theequation{A\arabic{equation}}
\setcounter{equation}{0}
\pagenumbering{arabic}% resets `page` counter to 1
\renewcommand*{\thepage}{A\arabic{page}}

\newpage
\appendix
\section{Related Works}
\label{appx:related_works}
With the burgeoning success of generative models, there has been an increasing focus on integrating watermarking techniques into these models to enhance security and traceability recently. 

\paragraph{LLM watermark} One class of watermarking techniques for text generated by LLMs is based on dividing the vocabulary into ``green lists'' and ``red lists''. This line of works shares the spirit of our proposed tabular data watermark, but differs significantly from the methodological perspective. Representative works of this interesting direction include \cite{kirchenbauer2023watermark} and \cite{zhao2023provable}. \cite{kirchenbauer2023watermark} embeds watermarks by prioritized sampling of randomized ``green list'' tokens during text generation. It considers ``hard'' and ``soft'' embedding of ``green list'' tokens, which demonstrates great empirical effectiveness on the modern LLMs such as the OPT model. \cite{zhao2023provable} introduces a framework that infuses binary signatures into LLM-generated text using a learning-based approach. The framework features three key components: a message encoding module to embed the binary signatures, a reparameterization module to transform the encoded messages for reliable embedding, and a decoding module to extract the watermarks. An optimized beam search algorithm is employed to ensure the watermarked text remains coherent and consistent. 

Concurrently, watermarking LLM-generated text from the cryptographic perspective has led to fruitful and inspiring results in this direction. \cite{christ2023undetectable} introduces a watermarking scheme that embeds undetectable watermarks into generated text by modifying token selection probabilities using cryptographic techniques. This makes the watermark detectable only by those with a secret key. Similarly, \cite{zhao2024permute} presents the Permute-and-Flip (PF) decoder, which offers a watermarking scheme specifically tailored for the PF decoder. This scheme aims to maintain text quality and robustness while performing well in terms of perplexity and the detectability of watermarked texts. Additionally, there are methods that embed watermarks directly into the weights of LLMs \cite{zhao2022distillation,zhao2023protecting,xu2024learning}. Specifically, \cite{zhao2022distillation} introduces Distillation-Resistant Watermarking (DRW), which injects watermarks into prediction probabilities using a sinusoidal signal. The proposed watermark can be effectively detected by model probing, without inducing significant performance loss of the original model. \cite{zhao2023protecting} presents GINSEW, embedding invisible watermarks into probability vectors during text generation, which is detectable only with a secret key and robust against synonym randomization attacks. \cite{xu2024learning} introduces a reinforcement learning-based framework that co-trains a LLM and a detector to embed watermarks into model weights. The proposed framework has an emphasized robustness against adversarial attacks while successfully maintaining model utility.

\paragraph{Watermarking generated image data} Watermarking generative image data has drawn growing interest especially in recent years. In the pioneering work of \cite{wen2024tree}, the proposed framework embeds watermarks into the initial noise vector of diffusion models during the sampling process, resulting in surprisingly resilient, effective and invisible image data watermark. \cite{fernandez2023stable} finetunes the decoder of a diffusion model to embed watermarks directly into generated images, ensuring high detection accuracy and robustness against modifications. Both approaches emphasize the importance of embedding watermarks during the generation process to ensure invisibility and robustness.

\paragraph{Challenges in watermarking AI-generated data} Recent studies highlight significant theoretical and practical challenges in watermarking AI-generated content. \cite{zhang2023watermarks} proves the theoretical impossibility of simultaneously creating i) strong watermarks that cannot be removed by a computationally bounded attacker and ii) watermarks that does not significantly degrade data quality. As a counterexample, it constructs a random-walk-based attack that preserves content quality while effectively removing watermarks. \cite{zhao2023invisible} demonstrates the practical vulnerability of invisible watermarks, showing that regeneration attacks with noise addition and image reconstruction via generative models can remove up to $99\%$ of watermarks without significant quality loss. These findings emphasize the need to shift from invisible to semantically visible watermarks for robust protection.

\clearpage
\newpage
\section{Proof of Main Theorems}
\label{appx:proofs}

\subsection{Proof of \cref{thm:fidelity}}
\label{appx:proof_thm_fidelity}
\begin{proof}
$\forall x \in \mathbf{X}$, assume $x-i(x)$ lies in the $j th$ pair of consecutive intervals $[\frac{2j-2}{2m},\frac{2j-1}{2m}] \cup [\frac{2j-1}{2m},\frac{2j}{2m}]$. In our watermarking process, we resample a value $x_w$ from the nearest interval in the ``green list'' intervals $G$ to replace the $x$. We can see that 
\begin{align*}
\argmin _{Y \in G} \min_{y \in Y} d(x,y) =\argmin_{Y \in G} \max_{y \in Y} d(x,y).
\end{align*}
Assume $Y^*$ is the chosen nearest interval from the green list, and $Y^{**}$ is the interval chosen to be in the green list in the group $\{[\frac{2j-2}{2m},\frac{2j-1}{2m}], [\frac{2j-1}{2m},\frac{2j}{2m}]\}$,
we then have
\begin{align*}
d(x,x_w)\leq \max_{y \in Y^*} d(x,y) \leq \max_{y\in Y^{**} } d(x,y)\leq \max_{y\in [\frac{2j-2}{2m},\frac{2j-1}{2m}] \cup [\frac{2j-1}{2m},\frac{2j}{2m}]} d(x,y)\leq \frac{1}{m}
\end{align*}
\end{proof}

\subsection{Proof of \cref{corollary:wasserstein}}
\begin{proof}
The $k$-Wasserstein distance for two discrete measures 
\begin{equation}
\mu_0:=\sum_{i=1}^{k_0} a_{0 i} \delta_{x_{0 i}} \quad \text { and } \quad \mu_1:=\sum_{i=1}^{k_1} a_{1 i} \delta_{x_{1 i}},
\end{equation}
is defined as 
\begin{equation}
\label{definition_wasserstein}
\left[\mathcal{W}_k\left(\mu_0, \mu_1\right)\right]^k=\left\{\begin{aligned}
\min _{T \in \mathbb{R}^{k_0 \times k_1}} & \sum_{i j} T_{i j}\left|x_{0 i}-x_{1 j}\right|_2^k \\
\text { s.t. } & T \geq 0 \\
& \sum_j T_{i j}=a_{0 i} \\
& \sum_i T_{i j}=a_{1 j} .
\end{aligned}\right.
\end{equation}
For $F_{\mathbf{X}}$ and $F_{\mathbf{X}_w}$, if we take $T=diag 
\{\frac{1}{n},\frac{1}{n},\cdots,\frac{1}{n}\}$, $x_{0i}=\mathbf{X}[i,:],x_{1j}=\mathbf{X}_w[j,:]$ in \eqref{definition_wasserstein}, we could see
\begin{align}
\mathcal{W}_k\left(F_{\mathbf{X}},F_{\mathbf{X_w}} \right) \leq (\sum_{j=1}^{n}\frac{1}{n}\| \mathbf{X}[j,:]-\mathbf{X}_w [j,:]\|_2^k)^{\frac{1}{k}}\leq \frac{p^{\frac{1}{2}}}{m}
\end{align}
\end{proof}

\subsection{Proof of \cref{thm:preliminary}}
\label{appx:proof_thm_prelim}
\begin{proof}
We prove this by using the technique of truncation. $\forall \epsilon>0$, we could first choose $n$ large enough, so that 
\begin{align*}
\int_{-n}^{n} f(x)dx >1-\epsilon.
\end{align*}
Denote the ``green list'' intervals $G$ in $[0,1]$ as $\{ g_1(0),g_2(0),\cdots,g_m(0) \}$, where $g_i(0)$ is the interval chosen in the $i$-th group to be in the green list. We then define $g_k(j)$ as $g_k+j$, $\forall j \neq 0$. Therefore, 

\begin{align*}
\mathbf{P}(x-i(x) \in G )& =\mathbf{P}(x \in \bigcup_{j=-\infty}^{\infty} \bigcup_{k=1}^{m}g_k(j) )\\ & =\mathbf{P}(x \in \bigcup_{j=-\infty}^{-n-1} \bigcup_{k=1}^{m}g_k(j)  )+\mathbf{P}(x \in \bigcup_{j=n+1}^{\infty}\bigcup_{k=1}^{m} g_k(j) )+\mathbf{P}(x \in \bigcup_{j=-n}^{n}\bigcup_{k=1}^{m}g_k(j)),
\end{align*}

the first two terms of which could be bounded by 

\begin{equation*}
\begin{aligned}
\mathbf{P}(x \in \bigcup_{j=-\infty}^{-n-1} \bigcup_{k=1}^{m}g_k(j)  )
+
\mathbf{P}(x \in \bigcup_{j=n+1}^{\infty}\bigcup_{k=1}^{m} g_k(j) ) 
&\leq 
\int_{-\infty}^{-n}f(x)dx+\int_{n}^{\infty}f(x)dx \\
&=1-\int_{-n}^{n}f(x)dx <\epsilon.
\end{aligned}
\end{equation*}

We next consider the third term: we use $h_k(j)$ to denote the interval such that 
$g_k(j)\cup h_k(j)=[\frac{k-1}{m}+j,\frac{k}{m}+j]$. We can see that $h_k(0)$ is the complement of the $k$-th ``green list'' intervals in the $k$-th group $\{[\frac{2j-2}{2m},\frac{2j-1}{2m}],[\frac{2j-1}{2m},\frac{2j}{2m}] \}$, and $h_k(j)=h_k(0)+j$.

We then have 
\begin{equation*}
\begin{aligned}
| 
\mathbf{P}(x \in \bigcup_{j=-n}^{n}\bigcup_{k=1}^{m}g_k(j) ) -
\mathbf{P}(x \in \bigcup_{j=-n}^{n}\bigcup_{k=1}^{m}h_k(j) )|  
&\leq \sum_{j=-n}^{n}\sum_{k=1}^{m}| \mathbf{P}(x \in g_k(j))-\mathbf{P}(x \in h_k (j) ) | \\ 
&=     
\sum_{j=-n}^{n}\sum_{k=1}^{m}| \int_{x \in g_k(j)}f(x)dx-\int_{x \in h_k(j)} f(x)dx| \\
&\leq \sum_{j=-n}^{n}\sum_{k=1}^{m}\max_{x\in g_k(j),y\in h_k(j)}|f(x)-f(y) | \frac{1}{m} \\ 
&\leq \max_{k,j}\max_{x \in g_k(j),y \in h_k(j)} |f(x)-f(y) | (2n+1).
\end{aligned}
\end{equation*}

Using the result that a continuous function in a compact set is uniformly continuous, we can find some $m_0$ so that when $m\geq m_0$, namely $\frac{2}{m}\leq \frac{2}{m_0}$, 
\begin{align*}
\max_{k,j}\max_{x \in g_k(j),y \in h_k(j)} |f(x)-f(y)|\leq \max_{-n \leq x,y \leq n; |x-y|\leq \frac{2}{m_0} } |f(x)-f(y) | <\frac{\epsilon}{2n+1}.
\end{align*}

Under this circumstance, we have
\begin{align*}
|\mathbf{P}(x \in \bigcup_{j=-n}^{n}\bigcup_{k=1}^{m}g_k(j) ) -\mathbf{P}(x \in \bigcup_{j=-n}^{n}\bigcup_{k=1}^{m}h_k(j) )| \leq  \max_{k,j}\max_{x \in g_k(j),y \in h_k(j)} |f(x)-f(y) | (2n+1) <\epsilon,
\end{align*}
which implies
\begin{equation}
\label{eq:uniform_continous}
\begin{aligned}
2 
\mathbf{P}(x \in \bigcup_{j=-n}^{n}\bigcup_{k=1}^{m}g_k(j) )-\epsilon 
& \leq \mathbf{P}(x \in \bigcup_{j=-n}^{n}\bigcup_{k=1}^{m}g_k(j) )+\mathbf{P}(x \in \bigcup_{j=-n}^{n}\bigcup_{k=1}^{m}h_k(j) ) \\
& \leq 2 \mathbf{P}(x \in \bigcup_{j=-n}^{n}\bigcup_{k=1}^{m}g_k(j) )+\epsilon.
\end{aligned}
\end{equation}

Note that 
\begin{equation}
\label{eq:bound_for_sum}
1-\epsilon<\mathbf{P}(x \in \bigcup_{j=-n}^{n}\bigcup_{k=1}^{m}g_k(j) )+\mathbf{P}(x \in \bigcup_{j=-n}^{n}\bigcup_{k=1}^{m}h_k(j) )=\mathbf{P}(x \in [-n,n]) \leq 1,    
\end{equation}

Plugging \eqref{eq:uniform_continous} into \eqref{eq:bound_for_sum}, we then have
\begin{align*}
 1-2\epsilon < 2 \mathbf{P}(x \in \bigcup_{j=-n}^{n}\bigcup_{k=1}^{m}g_k(j) ) \leq 1+\epsilon,
\end{align*}

Therefore we have 
\begin{equation}
\label{eq:upper_bound_detect}
\begin{aligned}
{P({x}-i(x)\in G )} 
&=
\mathbf{P}(x \in \bigcup_{j=-\infty}^{-n-1} \bigcup_{k=1}^{m}g_k(j)  )+\mathbf{P}(x \in \bigcup_{j=n+1}^{\infty}\bigcup_{k=1}^{m} g_k(j) )+\mathbf{P}(x \in \bigcup_{j=-n}^{n}\bigcup_{k=1}^{m}g_k(j) ) \\
&\leq 
\epsilon+\frac{1+\epsilon}{2},
\end{aligned}
\end{equation}
and 
\begin{align}
\label{eq:lower_bound_detect}
P({x}-i(x)\in G )\geq \mathbf{P}(x \in \bigcup_{j=-n}^{n}\bigcup_{k=1}^{m}g_k(j) ) >\frac{1}{2}-\epsilon,
\end{align}
since $\forall \epsilon>0$, we have a $m_0$ so that when $m>m_0$, \eqref{eq:upper_bound_detect} and \eqref{eq:lower_bound_detect} hold, we finsh the proof.
\end{proof}

\subsection{Proof of \cref{thm:preliminary_p_dimensional}}
\label{appx:proof_p_dim}
To prove \cref{thm:preliminary_p_dimensional}, we first prove a lemma:
\begin{lemma}
\label{thm:preliminary_p_bounded_version}
Consider a $p$-dimensional probability distribution $F$ supported in $\|x\|_2 \leq R$ with continuous probability density function $p(x_1,x_2,x_3,\cdots,x_p)$, then as $m \to \infty$,
\begin{align}
\begin{split}
\mathbf{P}_{x \sim F}(\bigcap_{k=1}^{p} A_k )\to (\frac{1}{2})^{p},
\end{split}
\end{align}
where $A_i \in \{ \{x-i(x)\in G \}, \{ x-i(x)\notin G \}   \}$
\end{lemma}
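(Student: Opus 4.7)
The plan is to generalize the uniform-continuity argument from \cref{thm:preliminary} to the $p$-dimensional setting, where the compact support assumption $\|x\|_2 \leq R$ eliminates the need for the truncation-at-infinity step that complicated the 1D proof. First, I would tile an enclosing hypercube containing the support into axis-aligned hypercubes of side $\tfrac{1}{m}$, indexed by $(j_1,\dots,j_p) \in \mathbb{Z}^p$:
\[
C_{j_1,\dots,j_p} \;=\; \prod_{k=1}^p \Bigl[\tfrac{j_k}{m}, \tfrac{j_k+1}{m}\Bigr].
\]
Each such outer cube subdivides into $2^p$ sub-hypercubes of side $\tfrac{1}{2m}$, one per combination of (green/non-green) halves in each of the $p$ coordinate pairs. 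The event $\bigcap_{k=1}^p A_k$, restricted to $C_{j_1,\dots,j_p}$, selects exactly one of these $2^p$ sub-hypercubes---the one determined by whether each $A_k$ is ``in $G$'' or ``not in $G$''---which I denote $C^*_{j_1,\dots,j_p}$. Crucially, its volume is $(\tfrac{1}{2})^p$ times the volume of $C_{j_1,\dots,j_p}$, regardless of which $2^p$-choice is fixed by the $A_k$'s.

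Second, since $f$ is continuous on the compact set $\{\|x\|_2 \leq R\}$, it is bounded and uniformly continuous there. Thus for any $\epsilon>0$ there exists $m_0$ such that, for all $m \geq m_0$ and every outer cube meeting the support, $\sup_{x,y \in C_{j_1,\dots,j_p}} |f(x) - f(y)| < \epsilon$. Bounding each integrand by its supremum and infimum on the outer cube gives a per-cube comparison
\[
\left| \int_{C^*_{j_1,\dots,j_p}} f\,dx \;-\; \Bigl(\tfrac{1}{2}\Bigr)^p \int_{C_{j_1,\dots,j_p}} f\,dx \right| \;\leq\; \epsilon \cdot \Bigl(\tfrac{1}{m}\Bigr)^p.
\]

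Third, I would sum this estimate over all $O((mR)^p)$ outer cubes intersecting the support. The accumulated error is at most $\epsilon \cdot O(R^p)$, independent of $m$, while the main terms telescope into $(\tfrac{1}{2})^p \sum_{j_1,\dots,j_p} \int_{C_{j_1,\dots,j_p}} f\,dx = (\tfrac{1}{2})^p$. Hence $\mathbf{P}_{x\sim F}(\bigcap_{k=1}^p A_k) \to (\tfrac{1}{2})^p$, since $\epsilon$ is arbitrary.

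The main obstacle is the multi-dimensional bookkeeping: the per-cube error $O(\epsilon \cdot m^{-p})$ must exactly balance against the cube count $O((mR)^p)$ to yield a total error of order $\epsilon R^p$ that is independent of $m$ and vanishes with $\epsilon$. A minor technical point is handling outer cubes that only partially intersect the support, which form an $O(m^{p-1})$-sized boundary layer whose contribution to both the main sum and the error vanishes as $m \to \infty$. Notably, the argument never invokes independence of the coordinates of $x$; the ``asymptotic independence'' reflected in the product $(\tfrac{1}{2})^p$ stems entirely from the per-coordinate product structure of the watermark and the local near-constancy of $f$ on cubes of side $\tfrac{1}{m}$, not from any factorization of the joint density.
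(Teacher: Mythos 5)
Your proof is correct, but it takes a genuinely different route from the paper's. The paper argues by conditioning: it restricts to the event that the marginal density $p_1(x_1)$ is bounded below by $1/N_0$ (so that the conditional density $p_2(\cdot\,|\,x_1)$ inherits uniform continuity from the joint density on the compact ball), applies the one-dimensional result \cref{thm:preliminary} to the marginal and then reruns its argument on the conditional law of $x_2$ given $x_1$, and thereby obtains only $\liminf_{m\to\infty}\mathbf{P}(\bigcap_k A_k)\ge (\tfrac12)^p$ for each of the $2^p$ sign patterns; since these $2^p$ probabilities sum to $1$ for every $m$, the liminfs are forced to be genuine limits equal to $(\tfrac12)^p$. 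Your tiling argument instead works directly in $p$ dimensions: the cubes of side $\tfrac1m$ align with the pair structure of the green list, each sign pattern carves out exactly one dyadic sub-cube of relative volume $(\tfrac12)^p$, and uniform continuity of $f$ on the compact support gives a per-cube error of order $\epsilon m^{-p}$ that balances the $O((mR)^p)$ cube count. What your approach buys is a symmetric, conditional-density-free proof that delivers the limit in one shot for all sign patterns, with no level-set truncation $\{p_1(x_1)\ge \tfrac1j\}$ and no liminf/sum-to-one device; the price is that you do not reuse the one-dimensional lemma as a black box, and you must (as you correctly flag) separately dispose of the $O(m^{p-1})$ cubes straddling the boundary of the support, whose total mass is $O(\lVert f\rVert_\infty/m)\to 0$ --- a point that matters if $f$ is continuous only on the closed ball rather than on all of $\mathbb{R}^p$. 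Both proofs ultimately rest on the same ingredient, uniform continuity on a compact set, and your bookkeeping of the error against the cube count is exactly right.
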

\begin{proof}
Without loss of generality, we prove this result for $p=2$. The proof for $p>2$ is similar.
First, we prove that 
\begin{align}
\mathbf{P}(\bigcup_{j=1}^{\infty} \{p_{1}(x_1)\geq \frac{1}{j}  \})=1,
\end{align}
where $p_1$ is the marginal distribution of $x_1$. This is because 
\begin{align}
\mathbf{P}(\bigcup_{j=1}^{\infty}\{p_1(x_1)\geq \frac{1}{j} \} \bigcup \{p_1(x_1)=0 \})=1,
\end{align}
while $\mathbf{P}(p_1(x_1)=0)=0$.
Then by applying \cref{thm:preliminary} to $p_1(x_1)$, we have 
\begin{align}
\lim_{m \to \infty}\mathbf{P}(x_1-i(x_1)\in G \bigcap \bigcup_{j=1}^{\infty}\{p_1(x_1)\geq \frac{1}{j} \})=\lim_{m \to \infty }\mathbf{P}(x_1 -i(x_1) \in G)=\frac{1}{2}.
\end{align}

Since the sequence $\{ \ \bigcup_{j=1}^{N}\{p_1(x_1)\geq \frac{1}{j} \}, N=1,2,\cdots \}$ monotonically increases and converges to $\bigcup_{j=1}^{\infty} \{p_1(x_1)\geq \frac{1}{j} \}$,
we have
\begin{align}
\mathbf{P}(\bigcup_{j=1}^{N}\{p_1(x_1)\geq \frac{1}{j} \} ) \to \mathbf{P}(\bigcup_{j=1}^{\infty}\{p_1(x_1)\geq \frac{1}{j} \})=1, as \ N\to \infty.
\end{align}

Therefore, for any $\delta>0$,there exists $M_0$ and $N_0$, such that when $m>M_0$,
\begin{align}
\mathbf{P}(x_1-i(x_1) \in G \bigcap \bigcup_{j=1}^{N_0}\{  
p_1(x_1)\geq \frac{1}{j}
\})\geq \frac{1}{2}-\delta.
\end{align}

We further have
\begin{align}
\begin{split}
& \mathbf{P}(x_1-i(x_1) \in G \bigcap \bigcup_{j=1}^{N_0}\{  
p_1(x_1)\geq \frac{1}{j}
\},x_2-i(x_2)\in G)  \\= & \int_{x_1-i(x_1)\in G\bigcap \bigcup_{j=1}^{N_0} \{  
p_1(x_1)\geq \frac{1}{j}
\}} p_1(x_1)d x_1 \int_{x_2-i(x_2) \in G }p_2(x_2|x_1) d x_2.
\end{split}
\end{align}

We can check for $\forall x_2^{'}, x_2^{''}$ and $x_1$ such that $p(x_1) \geq \frac{1}{N_0}$,
\begin{align}
|p_2(x_2^{'}|x_1)-p_2(x_2^{''}|x_1)|= |\frac{p(x_1,x_2^{'})-p(x_1,x_2^{''})}{p_1(x_1)}|\leq N_0|p(x_1,x_2^{'})-p(x_1,x_2^{''}) |.
\end{align}

By the result that a continuous function in a compact set is uniformly continuous, there exists a $M_1$, such that $|p(x_1,x_2^{'})-p(x_1,x_2^{''})|\leq \frac{\delta}{2N_0R}$, $\forall (x_1,x_2^{'}),(x_1,x_2^{''})\in B(0,R)$ and $|x_2^{'}-x_2^{''}|\leq \frac{1}{M_1}$. Consequently, using the similar arguments as in \cref{appx:proof_thm_prelim}, if $m\geq 2M_1$,
\begin{equation}
\begin{aligned}
|\int_{x_2-i(x_2) \in G }p_2(x_2|x_1) d x_2-\int_{x_2-i(x_2) \notin G }p_2(x_2|x_1) d x_2| 
& \leq 
\max_{|a-b|\leq \frac{1}{M_1}}|p_2(a|x_1)-p_2(b|x_1) |\times 2R \\
& \leq 
N_0 \frac{\delta}{2N_0R}\times 2R \\ 
& \leq \delta,
\end{aligned}
\end{equation}
which implies
\begin{align}
\int_{x_2-i(x_2) \in G }p_2(x_2|x_1) d x_2\geq \frac{1}{2}-\frac{\delta}{2},
\end{align}
$\forall x_1$ such that $p_1(x_1)\geq \frac{1}{N_0}$.
Therefore
\begin{align}
\begin{split}
  & \mathbf{P}(x_1-i(x_1) \in G,x_2-i(x_2)\in G)
 \\ \geq  & \mathbf{P}(x_1-i(x_1) \in G \bigcap \bigcup_{j=1}^{N_0}\{  
p_1(x_1)\geq \frac{1}{j}
\},x_2-i(x_2)\in G)  \\= & \int_{x_1-i(x_1)\in G\bigcap \bigcup_{j=1}^{N_0} \{  
p_1(x_1)\geq \frac{1}{j}
\}} p_1(x_1)d x_1 \int_{x_2-i(x_2) \in G }p_2(x_2|x_1) d x_2
\\ \geq & \int_{x_1-i(x_1)\in G\bigcap \bigcup_{j=1}^{N_0} \{  
p_1(x_1) \geq \frac{1}{j}
\}} p_1(x_1)d x_1 (\frac{1}{2}-\frac{\delta}{2}) \\
\geq & (\frac{1}{2}-\delta)(\frac{1}{2}-\frac{\delta}{2})\geq \frac{1}{4}-\delta
\end{split}
\end{align}
given that $\delta<\frac{1}{2}$ when $m>\max\{M_0,M_1\}$.
Since we could choose $\delta$ arbitrarily, we have
\begin{align}
\label{sum_1}
\liminf_{m\to \infty}\mathbf{P}(x_1-i(x_1) \in G,x_2-i(x_2)\in G) \geq \frac{1}{4},
\end{align}
similarly, we have
\begin{align}
\label{sum_2}
\liminf_{m\to \infty}\mathbf{P}(x_1-i(x_1) \in G,x_2-i(x_2)\notin G) \geq \frac{1}{4},
\end{align}
\begin{align}
\label{sum_3}
\liminf_{m\to \infty}\mathbf{P}(x_1-i(x_1) \notin G,x_2-i(x_2)\in G) \geq \frac{1}{4},
\end{align}
\begin{align}
\label{sum_4}
\liminf_{m\to \infty}\mathbf{P}(x_1-i(x_1) \notin G,x_2-i(x_2)\notin G) \geq \frac{1}{4}.
\end{align}
However, the sum of \eqref{sum_1} to \eqref{sum_4} is 1. This implies that the limitations in \eqref{sum_1} to \eqref{sum_4} are all $\frac{1}{4}$, which finishes the proof.
\end{proof}

Now we are ready to prove \cref{thm:preliminary_p_dimensional}.
\begin{proof}
\begin{align}
\mathbf{P}(\bigcap_{k=1}^{p} A_k) &=\mathbf{P}(\|x\|_2 \leq R)\mathbf{P}(\bigcap_{k=1}^{p}A_k | \|x \|_2 \leq R)+\mathbf{P}(\| x\|_2 > R)\mathbf{P}(\bigcap_{k=1}^{p}A_k| | \|x\|_2 > R).
\end{align}

We can choose $R$ such that 
\begin{align}
|\mathbf{P}(\|x\|_2 \leq R)-1|<\epsilon.
\end{align}

Also by \cref{thm:preliminary_p_bounded_version}, there exists $M$, such that when $m>M$,
\begin{align}
|\mathbf{P}(\bigcap_{k=1}^{p}A_k | \|x \|_2 \leq R)-\frac{1}{2}|<\epsilon.
\end{align}

Therefore when $m>M$,
\begin{equation}
\begin{aligned}
|\mathbf{P}(\bigcap_{k=1}^{p} A_k)-\frac{1}{2}|
& \leq |\mathbf{P}(\|x\|_2 \leq R)| |\mathbf{P}(\bigcap_{k=1}^{p}A_k | \|x \|_2 \leq R)-\frac{1}{2}|+\mathbf{P}(\|x\|_2 > R)(1-\frac{1}{2}) \\ 
& \leq \epsilon+\frac{1}{2}\epsilon<2\epsilon.
\end{aligned}
\end{equation}
Since $\epsilon$ is arbitrary, we know that the convergence holds.
\end{proof}

\subsection{Proof of \cref{thm:high_dimensional}}
To prove this result, we consider a lemma from \cite{bentkus2005lyapunov}.
\begin{lemma}
\label{first_result_convex}
(c.f. Theorem 1.1 in \cite{bentkus2005lyapunov}) Let $\mathbf{y}_1, \ldots, \mathbf{y}_n$ be independent $p$-dimensional random vectors with a common mean $\mathbf{E} \mathbf{y}_j=0$. Write $S_Y=\mathbf{y}_1+\cdots+\mathbf{y}_n$. Throughout we assume that $S_Y$ has a nondegenerated distribution in the sense that the covariance operator, say $C^2=\operatorname{Cov} S_Y$, is invertible ( $C$ stands for the positive root of $\left.C^2\right)$. Let $Z$ be a Gaussian random vector such that $\mathrm{E} Z=0$ and $\operatorname{Cov} S_Y$ and $\operatorname{Cov} Z$ are equal. Write
$$
\beta=\beta_1+\cdots+\beta_n, \quad \beta_k=\mathbf{E}\left|C^{-1} \mathbf{y}_k\right|_2^3,
$$
and
$$
\Delta(\mathcal{C})=\sup _{A \in \mathcal{C}}|\mathbf{P}\{S_Y \in A\}-\mathbf{P}\{Z \in A\}|
$$
where $\mathcal{C}$ stands for the class of all convex subsets of $\mathbf{R}^p$. Then there exists an absolute positive constant $c$, such that
$$
\Delta(\mathcal{C}) \leq  c p^{1 / 4} \beta
$$
\end{lemma}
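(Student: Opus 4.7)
}
The plan is to reduce the statement to an application of \cref{first_result_convex} (the Bentkus convex-set bound) by writing each $T_j$ as a sum of $n$ i.i.d. Rademacher coordinates and then evaluating the Bentkus bound on Euclidean balls, which are convex sets. Concretely, write $T_j = \sum_{k=1}^n X_{j,k}$ with $X_{j,k}\stackrel{\text{i.i.d.}}{\sim}\mathrm{Bernoulli}(1/2)$ and set $\xi_{j,k}=2X_{j,k}-1\in\{-1,+1\}$, so that
\[
Y_j \;:=\; 2\sqrt{n}\!\left(\tfrac{T_j}{n}-\tfrac12\right) \;=\; \tfrac{1}{\sqrt{n}}\sum_{k=1}^n \xi_{j,k}.
\]
Then the vector $\mathbf{Y}=(Y_1,\dots,Y_p)^\top$ decomposes as $\mathbf{Y}=\sum_{k=1}^n \mathbf{y}_k$, where $\mathbf{y}_k=\tfrac{1}{\sqrt n}(\xi_{1,k},\dots,\xi_{p,k})^\top$ are independent across $k$, have mean $0$, and have covariance $\tfrac{1}{n}I_p$. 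Hence $\mathrm{Cov}(\mathbf{Y})=I_p$, so the matching Gaussian is $Z\sim N(0,I_p)$, for which $\|Z\|_2^2\sim\chi^2_p$ exactly.

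Next I would compute the Bentkus third-moment quantity. Because $\xi_{j,k}^2\equiv 1$, one has $\|\mathbf{y}_k\|_2^2=p/n$ deterministically, so $\beta_k = \mathbf{E}\|C^{-1}\mathbf{y}_k\|_2^3 = (p/n)^{3/2}$ (recall $C=I_p$), and therefore
\[
\beta \;=\; \sum_{k=1}^n \beta_k \;=\; n\cdot (p/n)^{3/2} \;=\; \frac{p^{3/2}}{\sqrt n}.
\]
Inserting this into \cref{first_result_convex} gives, over the class $\mathcal C$ of convex subsets of $\mathbb R^p$,
\[
\sup_{A\in\mathcal C}\bigl|\mathbf{P}(\mathbf{Y}\in A)-\mathbf{P}(Z\in A)\bigr| \;\leq\; c\, p^{1/4}\beta \;=\; c\,\frac{p^{7/4}}{\sqrt n}.
\]
Under the hypothesis $p=o(n^{2/7})$ this bound tends to $0$.

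Finally I would pass from the convex-set bound to the stated chi-square approximation by specializing $A$ to the Euclidean ball $B_t=\{y\in\mathbb R^p:\|y\|_2^2\leq t\}$, which is convex, to obtain
\[
\sup_{t\geq 0}\bigl|\mathbf{P}\bigl(\textstyle\sum_{j=1}^p Y_j^2 \leq t\bigr) - \mathbf{P}(\chi^2_p\leq t)\bigr| \;\leq\; c\,\frac{p^{7/4}}{\sqrt n}\;\longrightarrow\;0,
\]
which is the Kolmogorov-distance form of the convergence $\sum_j Y_j^2\xrightarrow{d}\chi^2_p$ as interpreted in the regime $p\to\infty$. The main technical obstacle I foresee is interpretational rather than computational: since $\chi^2_p$ is itself a moving target as $p\to\infty$, one must argue that the appropriate notion of convergence here is uniform closeness of cumulative distribution functions (equivalently, Kolmogorov distance $\to 0$), and justify that this is exactly what the convex-set bound on balls delivers. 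Once that interpretation is fixed, the bookkeeping for $\beta_k$ is easy because of the Rademacher-coordinate structure, and the exponent matching $p^{7/4}/\sqrt n\to 0\Longleftrightarrow p=o(n^{2/7})$ is automatic.
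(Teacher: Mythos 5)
The statement you were asked about is the Bentkus convex-set bound itself, which the paper does not prove but imports verbatim as Theorem 1.1 of \cite{bentkus2005lyapunov}; your proposal likewise treats it as a black box, so no comparison of proofs of the lemma is possible (nor expected). What you actually prove is \cref{thm:high_dimensional}, and there your argument coincides with the paper's own proof essentially step for step: the same decomposition of $(T_1-\tfrac n2,\dots,T_p-\tfrac n2)$ into $n$ i.i.d.\ centered Bernoulli (equivalently scaled Rademacher) increments, the same application of the convex-set bound, and the same specialization to Euclidean balls to turn $\Delta(\mathcal C)\to 0$ into uniform closeness of the distribution of $\sum_j\bigl[2\sqrt n(\tfrac{T_j}{n}-\tfrac12)\bigr]^2$ to $\chi^2_p$. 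Your version is marginally cleaner: because $\|C^{-1}\mathbf{y}_k\|_2$ is deterministic you get $\beta_k=(p/n)^{3/2}$ exactly, whereas the paper passes through the bound $\beta_k\le\{\mathbf{E}\|C^{-1}\mathbf{y}_k\|_2^2\}^{3/2}$ (an inequality that is backwards in general and only harmless here precisely because the norm is constant) and carries an extra factor of $8$; both routes give $\Delta(\mathcal C)=O(p^{7/4}n^{-1/2})$ and hence the condition $p=o(n^{2/7})$.
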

Now we are ready to prove 
\cref{thm:high_dimensional}.
\begin{proof}
Note that 
\begin{align}
(T_1-\frac{n}{2},T_2-\frac{n}{2},\cdots,T_p-\frac{n}{2}) \overset{d}{=}\sum_{j=1}^{n}\mathbf{y}_j,
\end{align}
where $\{ \mathbf{y}_j,j=1,2,\cdots n\}$ are i.i.d. random vectors, with each of them having independent components with mean $0$ and variance $\frac{1}{4}$.
Then using the same notations as in \cref{first_result_convex} and the monotonicity of $l_p$ norm,
\begin{align}
\beta_k=\mathbf{E}\left|C^{-1} \mathbf{y}_k\right|_2^3\leq \left\{\mathbf{E}\left|C^{-1} \mathbf{y}_k\right|_2^2 \right\}^{\frac{3}{2}}=8p^{\frac{3}{2}}\frac{1}{n^{\frac{3}{2}}},
\end{align}
and 
\begin{align}
\beta=\sum_{j=1}^{n}\beta_j\leq 8\frac{p^{\frac{3}{2}}}{n^{\frac{1}{2}}},
\end{align}
therefore
\begin{align}
\Delta(\mathcal{C})=\sup _{A \in \mathcal{C}}|\mathbf{P}\{S_Y \in A\}-\mathbf{P}\{Z \in A\}| \leq 8 c \frac{p^{\frac{7}{4}}}{n^{\frac{1}{2}}},
\end{align}
which implies if $p=o(n^{\frac{2}{7}})$, 
$\Delta(\mathcal{C}) \to 0$ as $n\to \infty$. Note that $\mathcal{C}$ stands for the class of all convex subsets of $\mathbf{R}^p$, we further have
\begin{align}
\sup _{r \geq 0}|\mathbf{P}\{ \| 2{n}^{-\frac{1}{2}} S_Y\|_2 \leq r\}-\mathbf{P}\{  \|2{n}^{-\frac{1}{2}}  Z \|_2 \leq r\}| \to 0,
\end{align}
which further implies
\begin{align}
\sup _{r \geq 0}|\mathbf{P}\{ \| 2{n}^{-\frac{1}{2}} S_Y\|^2_2 \leq r\}-\mathbf{P}\{  \|2{n}^{-\frac{1}{2}}  Z \|^2_2 \leq r\}| \to 0,
\end{align}
note that
$\| 2{n}^{-\frac{1}{2}} S_Y\|^2_2=\sum_{j=1}^{p} \left[2\sqrt{n} \left(\frac{T_j}{n} - \frac{1}{2}\right)\right]^2$
and 
$ \|2{n}^{-\frac{1}{2}}  Z \|^2_2 \sim \chi^2_p$, we finish the proof.
\end{proof}

%%%%%%%%%%%%%%%%%%%%%%%%%%%%%%%%%%%%%%%%%%%%%%%%%%%%%%%%%%

\subsection{Proof of \cref{thm:robustness}}
\label{appx:proof_thm_robust}
\begin{proof}
Without loss of generality, we assume that the probability $x_{ij}$ is attacked successfully is $\frac{1}{2}$, $\forall i,j$, we use $k_i$ to denote the number of elements moved out of the green list in $ith$ column,
then according to Hoeffding’s inequality (c.f. Theorem 2.2.6 in \cite{vershynin2018high}),
\begin{align}
\mathbf{P}(\sum_{j=1}^{p}k_j \leq (1+\delta)  (\frac{1}{2}\sum_{j=1}^{p}\widehat{k}_j ))\geq 1-e^{-\frac{2 \delta^2 (\sum_{j=1}^{p}\frac{1}{2}\widehat{k}_j)^2 }{\sum_{j=1}^{p}\widehat{k}_j}}= 1-e^{-\frac{1}{2}\delta^2  \sum_{j=1}^{p}\widehat{k}_j},
\end{align}
$\forall \delta>0$.
Take $\delta=\frac{1}{{(np)}^{\frac{1}{4}}}$, we will have as long as 
\begin{align}
 \sum_{j=1}^{p}\widehat{k}_j \leq 
 \frac{1}{1+\frac{1}{(np)^{\frac{1}{4}}}}(np-\sqrt{np}\sqrt{\mathcal{X}_p^2(1-\alpha)}),
 \end{align}
 with a probability at least 
 \begin{align}
1-e^{-\frac{1}{2}(\sqrt{np}-\sqrt{\mathcal{X}_p^2(1-\alpha)})},
 \end{align}
 we have
 \begin{align}
 \sum_{j=1}^{p}k_j \leq \frac{1}{2}(np-\sqrt{np}\sqrt{\mathcal{X}_p^2(1-\alpha)}),
 \end{align}
 and therefore the attack fails.
\end{proof}

\newpage
\section{Python-Style Pseudo-Code}
\label{appx:pseudo_code}
We provide python-style pseudocode to facilitate understanding of the proposed watermark. During testing, one can count the number of elements in the $j$-th column falling inside the ``green list'' intervals as $t_j$, and perform binomial or chi-square hypothesis-testing.

\begin{lstlisting}[language=Python, caption={Tabular data watermark.}]
import numpy as np

def getGreenList(lo=0, hi=1, m=1000):
    """ return a list of tuple, representing the green list intervals
    """

    waymarks = np.linspace(lo, hi, m + 1)

    green_list = []
    for i in range(0, m, 2):
        # randomly select one interval from each pair
        if np.random.uniform() > .5:
            i += 1
        green_list.append([waymarks[i], waymarks[i + 1]])

    return green_list

""" 
    Watermarking a p-column table:
    step 1: generate p green lists
    step 2: for each column, call 
            `singleColumnWatermark(arr, green_list)`
            to watermark the column vector.
"""

def singleColumnWatermark(arr, green_list):
    arr_wm = arr.copy()

    for i in range(arr_wm):
        # offset elem to [0, 1]
        e_flr = np.floor(arr_wm[i])
        e = arr_wm[i] - e_flr

        # find the nearest interval in the ``green list'' intervals
        g = findNearestInterval(e, green_list)

        if e > g[1] or e < g[0]:
            # if x[i] falls outside of the range, then
            # we re-sample the elem. from a uniform dist.
            arr_wm[i] = np.random.uniform(g[0], g[1]) + e_flr

    return arr_wm
\end{lstlisting}

\newpage
\begin{lstlisting}[language=Python, caption={Finding nearest interval with O(1) complexity.}]
def findNearestInterval(e, green_list, m):
    """ return the nearest interval to the given element e
    """

    min_dist, min_indx = np.inf, -1

    # offset elem to [0, 1]
    e = e - np.floor(e)

    # find the nearest pair of intervals
    idx_c = int(e // (2 / m))
    # neighboring indices
    idx_l0, idx_r0 = max(0, idx_c - 1), \
                     min(idx_c + 1, len(green_list) - 1)
    idx_l1, idx_r1 = max(0, idx_c - 2), \
                     min(idx_c + 2, len(green_list) - 1)

    # local green lists with possible candidates 
    # including the closest interval
    local_g_list = [green_list[idx_l1], green_list[idx_l0], 
                    green_list[idx_c],  green_list[idx_r0], 
                    green_list[idx_r1]]
    for i, intv in enumerate(local_g_list):
        cur_dist = np.abs(e - (intv[0] + intv[1]) / 2)

        if cur_dist < min_dist:
            min_dist = cur_dist
            min_indx = i

    return local_g_list[min_indx]

\end{lstlisting}

\newpage
\section{Experiment Settings}
\label{appx:exp_set}

\subsection{Synthetic Datasets}
\label{appx:syn_dat}
For synthetic datasets, we set $m=1000$, which we find sufficient for all the experiments. For watermark detection without additive noise attacks, we vary the row and column number of the tabular data within the range $\{10, 100, 1000\}$, resulting in tables of sizes within the range of $\{10, 100, 1000\} \times \{10, 100, 1000\}$. For each set-up of the tabular data size, we create $1000$ watermarked and unwatermarked tables to calculate the detection rate (true positive rate) and specificity (true negative rate) with the significance level of $\alpha = .005$; the tabular data is from standard zero-mean multivariate Gaussian distribution. The specificity in all settings remain 1. We repeat the experiments for 3 independent runs to calculate the error bars in \cref{subfig:chi_sing_col,subfig:chi_mult_col}.

For watermark detection with additive noise attacks, we set the tabular data size as $5000 \times 100$ as a representative and vary the variance of additive zero-mean Gaussian noise within the range $\{0.001, 0.01, 0.1, 1, 10\}$. To verify our results in \cref{thm:robustness}, we vary the proportion of elements in the table being modified \texttt{prp} within the range $\{0.50, 0.75, 0.90, 0.95\}$. Similarly, we create $1000$ watermarked and unwatermarked tables to calculate the detection rate (true positive rate) and specificity (true negative rate) with $\alpha = .005$. The specificity in all settings remain 1. We repeat the experiments for 3 independent runs to calculate the error bars in \cref{subfig:chi_sing_col_atck,subfig:chi_mult_col_atck}.

\subsection{Real-World Datasets}
\label{appx:real_dat}
\paragraph{Dataset} The full list of datasets and their properties are presented in \cref{tab:dst_info}.

\begin{table}[!htbp]
    \vspace{-5pt}
    \centering
    \caption{\textbf{List of datasets used for the evaluation and their descriptions.} }
    \vskip 0.1in
    \begin{tabular}{cccccccc}
    \toprule
        Alias & Name & \#Train & \#Validation 
           & \#Test & \#Num & \#Cat & Task type \\
    \midrule
    California & California Housing & 13209 & 3303 & 4128 & 8 & 0 & Regression \\
    Gesture & Gesture Phase & 6318 & 1580 & 1975 & 32 & 0 & Multiclass \\
    House & House 16H & 14581 & 3646 & 4557 & 16 & 0 & Regression \\
    Wilt & Wilt & 3096 & 775 & 968 & 5 & 0 & Binclass \\
    Higg-small & Higgs Small & 62751 & 15688 & 19610 & 28 & 0 & Binclass \\
    Miniboone & MiniBooNE & 83240 & 20811 & 26013 & 50 & 0 & Binclass \\
    \bottomrule
    \end{tabular}
    \label{tab:dst_info}
    \vspace{-5pt}
\end{table}

\paragraph{Evaluation measure}
To investigate the performance of our tabular watermark on real-world data, we sample from each generative model a generated dataset with the size of a real training set as in \cref{tab:dst_info}. For each set-up in evaluating fidelity, accuracy and robustness, we create $50$ watermarked and unwatermarked training sets (\ie, $n \times p$ tables) to measure the wasserstein distance and ROC-AUC scores. For utility evaluation, we create $50$ watermarked and unwatermarked training sets to train CatBoost models \cite{prokhorenkova2018catboost} for classification and regression tasks, which are then evaluated on the real testing sets. In our experiments, classification performances are evaluated by the F1 score, and regression performance is evaluated by the R2 score.

\paragraph{Tunning process of tab. generators} 
We follow \cite{kotelnikov2023tabddpm} and use the Optuna library \cite{akiba2019optuna} to tune the hyperparameters of the tabular data generators. The tuning process is guided by the values of the ML efficiency (with respect to Catboost) of the generated synthetic data on a hold-out validation dataset (the score is averaged over five different sampling seeds). We refer to \cite{kotelnikov2023tabddpm} for search spaces for all hyperparameters of the tab. generators. We run the experiments on a A6000 GPU. Training and evaluation process typically finishes within 24 hrs.

\newpage
\section{Additional Experiment Results}
\label{appx:add_exp}

\paragraph{Additional results on simulated tables} We additionally provide the ROC-AUC scores on simulated results for watermarking a single column corresponding with \cref{subfig:chi_sing_col} in \cref{tab:simu_tabs}. The ROC-AUC scores for watermarking all columns are 1.
\begin{table}[!htbp]
    \vspace{-5pt}
    \centering
    \caption{\textbf{Detection results on watermarking a single column in simulated tables.}}
    \vskip 0.1in
    \begin{tabular}{cccc}
    \toprule
    $n \times p$ 
       & $10 \times 10$ 
       & $10 \times 100$ 
       & $10 \times 1000$ \\
    \midrule
    {AUC} & $0.850$ & $0.700$ & $0.580$ \\
    \midrule
    {$n \times p$ } & $100 \times 10$ 
       & $100 \times 100$
       & $100 \times 1000$ \\
    \midrule
    {AUC} & $1.000$ & $1.000$ & $0.970$ \\
    \midrule
    {$n \times p$ } & $1000 \times 10$ 
       & $1000 \times 100$
       & $1000 \times 1000$ \\
    \midrule
    {AUC} & $1.000$ & $1.000$ & $1.000$ \\
    \bottomrule
    \end{tabular}
    \label{tab:simu_tabs}
\end{table}

\paragraph{Results on simulated high-dim. tables} We provide simulation results on high dimensional tables, where the number of columns $p$ exceeds the row numbers $n$ in \cref{tab:high_dim_tabs}. The tabular data is from standard Gaussian. We observe similar results with tabular data from 5-component randomly initialized gaussian mixture models, which mimic multimodal distributions.
\begin{table}[!htbp]
    \vspace{-5pt}
    \centering
    \caption{\textbf{Detection results on simulated high dimensional tables.} We report the true postive (TPR) and true negative rates (TNR) as well as ROC-AUC scores. We create $100$ watermarked and unwatermarked tables to calculate the scores.}
    \vskip 0.1in
    \begin{tabular}{cccc}
    \toprule
        $n \times p$ & $100 \times 100$ & $100 \times 1000$ & $100 \times 10000$ \\
    \midrule
    TPR/TNR & $1.000/1.000$ & $1.000/1.000$ & $1.000/1.000$ \\
    AUC & $1.000$ & $1.000$ & $1.000$ \\
    \bottomrule
    \end{tabular}
    \label{tab:high_dim_tabs}
    \vspace{-5pt}
\end{table}

\paragraph{Additional results on simulated attacks} We additionally provide the ROC-AUC scores after simulated attacks corresponding with watermarking a single column (\cref{subfig:chi_sing_col_atck}) in \cref{tab:simu_atcks}. The ROC-AUC scores corresponding with watermarking all columns are 1.
\begin{table}[!htbp]
    \vspace{-5pt}
    \centering
    \caption{\textbf{Detection results after attacks on watermarking a single column in simulated tables.} \texttt{s} represents the variances of additive noises; \texttt{p} represents the proportions of elements being modified by the attacks.}
    \vskip 0.1in
    \begin{tabular}{ccccc}
    \toprule
    {} & $s = 0.001$ 
       & $s = 0.1$ 
       & $s = 1$ 
       & $s = 10$ \\
    \midrule
    {$p=0.50$} & $1.000$ & $1.000$ & $1.000$ & $1.000$ \\
    {$p=0.75$} & $1.000$ & $1.000$ & $1.000$ & $1.000$ \\
    {$p=0.90$} & $1.000$ & $0.970$ & $0.990$ & $0.870$ \\
    {$p=0.95$} & $1.000$ & $0.780$ & $0.850$ & $0.690$ \\
    \bottomrule
    \end{tabular}
    \label{tab:simu_atcks}
\end{table}

\paragraph{Results on Column Selection} We provide number of columns selected in \cref{tab:col_sel} for each generated dataset before and after applying our heuristic smoothness check method.
\begin{table}[!htbp]
    \vspace{-5pt}
    \centering
    \caption{\textbf{Number of columns selected for each generated dataset.} }
    \vskip 0.1in
    \begin{tabular}{ccccccc}
    \toprule
        {} & California & Gesture & House 
           & Wilt & Higgs-small & Miniboone \\
    \midrule
    TVAE & 8/8 & 32/32 & 16/16 & 5/5 & 24/28 & 50/50 \\
    CTABGAN & 8/8 & 32/32 & 16/16 & 5/5 & 24/28 & 50/50 \\
    TabDDPM & 5/8 & 32/32 &  7/16 & 5/5 & 24/28 & 27/50 \\
    \bottomrule
    \end{tabular}
    \label{tab:col_sel}
    \vspace{-5pt}
\end{table}

\paragraph{Additional results on data fidelity} We provide additional results of data fidelity for TVAE and CTABGAN, shown in \cref{tab:wdist_tvae,tab:wdist_ctabgan}.
\begin{table}[!htbp]
    \vspace{-5pt}
    \centering
    \caption{\textbf{Wasserstein-1 distance between TVAE generated data and watermarked data distributions.}}
    \vskip 0.1in
    \begin{tabular}{ccccccc}
    \toprule
        {} & California & Gesture & House 
           & Wilt & Higgs-small & Miniboone \\
    \midrule
    Orig2Gen & 0.1327 & 0.1198 & 0.0658
             & 0.0906 & 0.1797 & 0.6720 \\
    Gen2Watermarked 
             & 0.0004 & 0.0004 & 0.0004 
             & 0.0003 & 0.0003 & 0.0004 \\
    \midrule
    $m$      & 1000   & 1000   & 1000   
             & 1000   & 1000   & 1000   \\
    $1 / m$  & 0.001 & 0.001   & 0.001  
             & 0.001 & 0.001   & 0.001 \\
    \bottomrule
    \end{tabular}
    \label{tab:wdist_tvae}
    \vspace{-5pt}
\end{table}
\begin{table}[!htbp]
    \vspace{-5pt}
    \centering
    \caption{\textbf{Wasserstein-1 distance between CTABGAN generated data and watermarked data distributions.}}
    \vskip 0.1in
    \begin{tabular}{ccccccc}
    \toprule
        {} & California & Gesture & House 
           & Wilt & Higgs-small & Miniboone \\
    \midrule
    Orig2Gen & 0.1683 & 0.1771 & 0.0926
             & 0.1003 & 0.0606 & 0.6471 \\
    Gen2Watermarked 
             & 0.0004 & 0.0004 & 0.0004 
             & 0.0004 & 0.0004 & 0.0004 \\
    \midrule
    $m$      & 1000   & 1000   & 1000   
             & 1000   & 1000   & 2500   \\
    $1 / m$  & 0.001 & 0.001   & 0.001  
             & 0.001 & 0.001   & 0.0004 \\
    \bottomrule
    \end{tabular}
    \label{tab:wdist_ctabgan}
    \vspace{-5pt}
\end{table}

\paragraph{Examples of ill-shaped column data distributions} We provide examples of ill-shaped column distributions in \cref{fig:appx_ill_dist}. We can see these distributions have spikes concentrated on certain values (especially in the left subfig), with undesirable smoothness violating our assumption in \cref{thm:preliminary}.
\begin{figure}[!h]
\centering
\begin{minipage}[t]{.49\textwidth}
    \centering
    \includegraphics[scale=0.45]{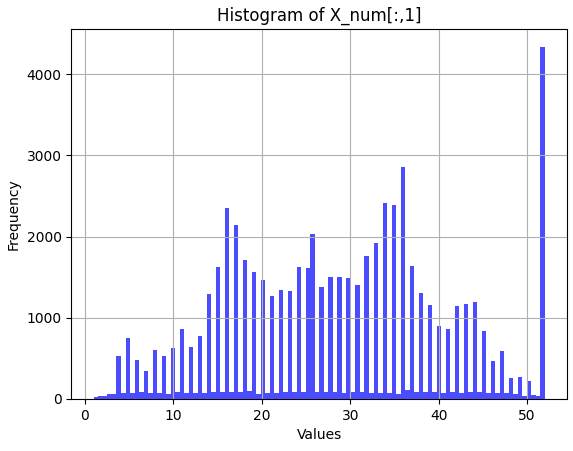}
\end{minipage}%
\hfill
\begin{minipage}[t]{.49\textwidth}
    \centering
    \includegraphics[scale=0.45]{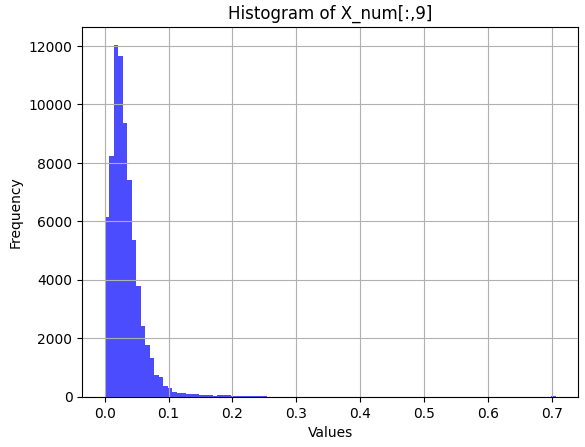}
\end{minipage}
\captionof{figure}{Histogram of spiky column data distributions. Examples generated by TabDDPM.}
\label{fig:appx_ill_dist}
\end{figure}

\newpage
\section{Limitations and Future work}
\label{appx:limit}
One potential limitation of our framework is that it primarily addresses continuous variables, leaving the watermarking of discrete variables as an area for future exploration. It would be worthwhile to investigate whether a similar technique involving ``green list'' and ``red list'' intervals can be effectively applied to discrete variables. Additionally, the specific choice of $m$
(the number of ``green list'' intervals) is closely tied to the smoothness of the data distribution. For instance, if the distribution under the null hypothesis is not smooth and exhibits characteristics such as spikes, a larger $m$ would be necessary to ensure that the probability of a sample point falling within a ``green list'' interval approaches $\frac{1}{2}$. These aspects highlight the need for further refinement and adaptation of our framework to accommodate a broader range of data types and distributional properties.

\section{Broader Impacts}
\label{appx:broader_impct}
This work contributes to enhancing the security of both synthetic and real-world datasets, which is critical in the context of AI and machine learning applications.
Specifically, generative models could be misused for disinformation or faking profiles. Our work focuses on watermarking generative data, which can facilitate the reliable detection of synthetic content, and could be important to address such harms from generative models.
We consider our work to be foundational and not tied to particular applications or deployments. It is possible that future works may involve malicious uses of this technique that we are unaware of for now.

\end{document}